\documentclass[12pt,a4paper]{article}
\usepackage[utf8]{inputenc}
\usepackage[T1]{fontenc}
\usepackage[english]{babel}
\usepackage{amsmath,amssymb,amsthm}
\usepackage{mathtools}
\usepackage[colorlinks=true,linkcolor=blue!70!black,citecolor=green!50!black,
            urlcolor=blue!70!black]{hyperref}
\usepackage[margin=2.5cm]{geometry}
\usepackage{enumitem}
\usepackage{tcolorbox}
\usepackage{xcolor}
\usepackage{tikz}
\usepackage{booktabs}
\usetikzlibrary{positioning}
\theoremstyle{plain}
\newtheorem{theorem}{Theorem}[section]
\newtheorem{proposition}[theorem]{Proposition}
\newtheorem{lemma}[theorem]{Lemma}
\newtheorem{corollary}[theorem]{Corollary}
\theoremstyle{definition}
\newtheorem{definition}[theorem]{Definition}
\newtheorem{example}[theorem]{Example}
\newtheorem{convention}[theorem]{Convention}
\theoremstyle{remark}
\newtheorem{remark}[theorem]{Remark}
\newcommand{\prof}{\mathrm{prof}}
\newcommand{\NN}{\mathbb{N}}
\newcommand{\cL}{\mathcal{L}}

\newcommand{\cJ}{\mathcal{J}}
\title{\textbf{Observers, Symmetries, and the Hierarchy of Language Classes:\\
A Theory of Computation Parameterized by the Observer}}

\author{
  Fabio Francesco Gabriele Buono\\
  \small Independent Researcher\\
  \small \href{https://orcid.org/0009-0004-9199-2793}{ORCID: 0009-0004-9199-2793}
}

\date{\today}

\begin{document}
\maketitle

\begin{abstract}
We introduce the \emph{observational hierarchy}, a new axis of
classification for formal languages, orthogonal to the Chomsky hierarchy.
An observer is a function $O : \Sigma^* \to S$ that determines which
information about the input is accessible to a computational system.
The order-blind automaton, which perceives the input as a multiset of
symbols rather than a sequence, constitutes the paradigmatic case.
We prove that the class of languages recognisable by any machine equipped
with such an observer coincides exactly with the permutation-closed
languages. We then define a partial order on observers that induces a
hierarchy of language classes parametrised not by the computational power
of the machine, but by the structure of the observer. We prove that this
hierarchy has the structure of a partial order with a diamond-shaped
profile sub-lattice, comprising the length branch
$O_\bot \prec O_{\mathrm{len}} \prec O_{\mathrm{prof}} \prec O_\top$
and the parity branch
$O_\bot \prec O_{\mathrm{par}} \prec O_{\mathrm{prof}} \prec O_\top$,
with $O_{\mathrm{len}}$ and $O_{\mathrm{par}}$ incomparable, and an
infinite subsequence branch
$O_\bot \prec O_1 \prec O_2 \prec \cdots \prec O_\top$, both converging
to the complete observer. We prove that the observational hierarchy is
strictly incomparable with the Chomsky hierarchy, and introduce the notion
of \emph{observational complexity} of a language. We further define
observer-parametrised complexity classes $\mathbf{P}_O$ and
$\mathbf{NP}_O$, and show that computational hardness and structural
blindness are two independent phenomena. In particular,
$\mathbf{P}_{O_{\mathrm{prof}}} = \mathbf{NP}_{O_{\mathrm{prof}}}$
holds as a structural collapse strictly inside $\mathbf{P}$.
\end{abstract}

\tableofcontents
\bigskip

\section{Introduction}

The classical theory of computational complexity classifies problems
according to the power of the machine required to solve them. Finite
automata, pushdown machines, Turing machines, and their deterministic and
nondeterministic variants~\cite{chomsky1956,sipser2012}. In this framework,
the input is always a string, an ordered sequence of symbols, and full
access to that sequence is taken for granted.

This paper inverts that perspective. We ask: \emph{which languages can a
computational system recognise when it does not have access to the complete
input, but only to a projection of it?} The power of the machine becomes
secondary; what matters is the structure of the \emph{observer}, that is,
of the function that transforms the input into available information.

The motivating case is that of an observer that cannot distinguish the
temporal order of symbols, but perceives only their counts. Formally, such
an observer maps every string $x \in \Sigma^*$ to its \emph{profile}
$\prof(x) \in \NN^k$, where $k = |\Sigma|$ and the $i$-th component
counts the occurrences of symbol $a_i$. An automaton equipped in this way
does not distinguish \texttt{01} from \texttt{10}, nor \texttt{aab} from
\texttt{aba} or \texttt{baa}.

\paragraph{Main contributions.}
\begin{enumerate}[label=(\roman*)]
  \item Definition of the order-blind automaton and characterisation of
        the languages it recognises (Section~\ref{sec:modello}).
  \item Definition of the preorder of observers, the induced partial order
        of observational power, the notion of $O$-saturation, and the
        monotonicity of induced language classes
        (Section~\ref{sec:teoria}).
  \item Construction of the observational hierarchy: proof that it has the
        structure of a partial order with a diamond-shaped profile
        sub-lattice (with $O_{\mathrm{len}}$ and $O_{\mathrm{par}}$
        incomparable, both below $O_{\mathrm{prof}}$) and an infinite
        subsequence branch, with explicit separations
        (Section~\ref{sec:gerarchia}).
  \item Formal connection with Simon's hierarchy and piecewise testable
        languages; observational reformulation of Simon's theorem
        (Section~\ref{sec:simon}).
  \item Proof that the observational hierarchy is orthogonal to the
        Chomsky hierarchy (Section~\ref{sec:ortogonalita}).
  \item Definition and first properties of observational complexity
        (Section~\ref{sec:complessita}).
  \item Physical interpretation of the framework
        (Section~\ref{sec:fisica}).
  \item Definition of observer-parametrised complexity classes
        $\mathbf{P}_O$ and $\mathbf{NP}_O$; separation of computational
        hardness from structural blindness; proof that
        $\mathbf{P}_{O_{\mathrm{prof}}} \subsetneq \mathbf{P}$
        (Section~\ref{sec:pvsnp}).
\end{enumerate}

\section{The base model: the order-blind automaton}
\label{sec:modello}

Let $\Sigma = \{a_1, \dots, a_k\}$ be a finite alphabet.

\begin{definition}[Profile]
The \emph{profile} of a string $x \in \Sigma^*$ is the function
\[
  \prof(x) = (c_1, \dots, c_k) \in \NN^k
\]
where $k = |\Sigma|$ and $c_i = |x|_{a_i}$ is the number of
occurrences of $a_i$ in $x$.
\end{definition}

\begin{definition}[Permutation-closed language]
A language $L \subseteq \Sigma^*$ is \emph{permutation-closed} if
\[
  x \in L \;\Rightarrow\;
  \forall\, y \in \Sigma^*\ \text{with}\ \prof(y) = \prof(x),
  \quad y \in L.
\]
Equivalently, $L$ is permutation-closed if and only if there exists a
function $f : \NN^k \to \{0,1\}$ such that
$L = \{x \in \Sigma^* \mid f(\prof(x)) = 1\}$.
\end{definition}

\begin{definition}[Order-blind automaton]
\label{def:ob-automaton}
An \emph{order-blind automaton} is a quintuple
$A = (Q,\, \Sigma,\, \delta,\, q_0,\, F)$
where $Q$ is a finite set of states, $q_0 \in Q$ is the initial state,
$F \subseteq Q$ is the set of accepting states, and
$\delta : Q \times \NN^k \to Q$
is the transition function. The \emph{extended transition function} is
\begin{equation}\label{eq:delta-hat}
  \hat\delta(q_0, x) \coloneqq \delta(q_0,\, \prof(x)),
\end{equation}
and the \emph{recognised language} is
\begin{equation}\label{eq:L-A}
  L(A) = \{x \in \Sigma^* \mid \hat\delta(q_0, x) \in F\}.
\end{equation}
\end{definition}

\begin{remark}
Unlike a classical DFA, whose transition function has type
$Q \times \Sigma \to Q$ and processes the input symbol by symbol,
the order-blind automaton has transition function
$\delta : Q \times \NN^k \to Q$, which reads the entire profile in a
single step (equation~\eqref{eq:delta-hat}). This is intentional: the
model is designed to recognise any permutation-closed language, including
non-regular ones such as $\{x \mid |x|_0 = |x|_1\}$.
\end{remark}

\begin{theorem}[Characterisation]
\label{thm:caratterizzazione}
A language $L \subseteq \Sigma^*$ is recognisable by an order-blind
automaton if and only if it is permutation-closed.
\end{theorem}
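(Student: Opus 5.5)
The proof splits into the two directions, and both reduce to the fact that the only input-dependent quantity in \eqref{eq:delta-hat} is $\prof(x)$.

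For the forward direction, let $A = (Q,\Sigma,\delta,q_0,F)$ be an order-blind automaton and put $L = L(A)$. Take any $x \in L$ and any $y$ with $\prof(y) = \prof(x)$. By \eqref{eq:delta-hat} we get $\hat\delta(q_0,y) = \delta(q_0,\prof(y)) = \delta(q_0,\prof(x)) = \hat\delta(q_0,x) \in F$. Then \eqref{eq:L-A} gives $y \in L$, so $L$ is permutation-closed. Equivalently, setting $f(v) = 1$ exactly when $\delta(q_0,v) \in F$ produces the function required by the second characterisation in the definition of permutation-closed languages.

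For the converse, let $L$ be permutation-closed. First I would check the stated equivalence in the definition, since I rely on it. Define $f(v) = 1$ if some $x$ with $\prof(x) = v$ lies in $L$, and $f(v) = 0$ otherwise. Then $L = \{x \mid f(\prof(x)) = 1\}$: the inclusion $\subseteq$ is immediate, and $\supseteq$ is exactly permutation-closure. Next, build the automaton with two states: $Q = \{q_0, q_{\mathrm{acc}}\}$, $F = \{q_{\mathrm{acc}}\}$, and transition $\delta(q, v) = q_{\mathrm{acc}}$ if $f(v) = 1$ and $\delta(q,v) = q_0$ otherwise. This transition may be defined arbitrarily on the state argument. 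Then $\hat\delta(q_0,x) \in F$ if and only if $f(\prof(x)) = 1$, which holds if and only if $x \in L$. Hence $L(A) = L$.

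There is no real obstacle. The one point to stress is that Definition~\ref{def:ob-automaton} places no finiteness or computability restriction on $\delta : Q \times \NN^k \to Q$. Because of this, the arbitrary (even non-regular or non-computable) predicate $f$ can be absorbed into $\delta$ while $Q$ stays finite, indeed of size two. The remaining edge cases need only a one-line remark: the empty string has profile $(0,\dots,0)$, and the cases $L = \emptyset$ and $L = \Sigma^*$ are handled by $f \equiv 0$ and $f \equiv 1$ respectively.
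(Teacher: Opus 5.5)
Your proof is correct and follows the paper's argument: the forward direction is the same chain of equalities through \eqref{eq:delta-hat}, and the converse absorbs $f$ into $\delta$ just as the paper does. The differences are cosmetic. You reuse $q_0$ as the rejecting state, giving two states rather than the paper's three, and you explicitly verify the equivalent characterisation of permutation-closure that the paper simply invokes.
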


\begin{proof}
$(\Rightarrow)$ Let $L = L(A)$ for some order-blind automaton $A$. If
$x \in L$ and $\prof(y) = \prof(x)$, then
\[
  \hat\delta(q_0, y)
  \overset{\eqref{eq:delta-hat}}{=} \delta(q_0, \prof(y))
  = \delta(q_0, \prof(x))
  \overset{\eqref{eq:delta-hat}}{=} \hat\delta(q_0, x) \in F,
\]
so $y \in L$.

$(\Leftarrow)$ Let $L$ be permutation-closed, so there exists
$f : \NN^k \to \{0,1\}$ with
$L = \{x \in \Sigma^* \mid f(\prof(x)) = 1\}$.
Construct $A$ with $Q = \{q_0, q_{\mathrm{acc}}, q_{\mathrm{rej}}\}$,
$F = \{q_{\mathrm{acc}}\}$, and
\[
  \delta(q_0, v) = \begin{cases}
    q_{\mathrm{acc}} & \text{if } f(v) = 1, \\
    q_{\mathrm{rej}} & \text{otherwise.}
  \end{cases}
\]
Then $\hat\delta(q_0, x) = \delta(q_0, \prof(x))$ by~\eqref{eq:delta-hat},
so $A$ accepts $x$ if and only if $f(\prof(x)) = 1$, giving $L(A) = L$.
\end{proof}

\begin{example}
Over $\{0,1\}$:
\begin{itemize}
  \item $L_1 = \{x \mid |x|_1 \equiv 0 \pmod{2}\}$ is
        permutation-closed: recognisable.
  \item $L_2 = \{x \mid |x|_0 > |x|_1\}$ is permutation-closed:
        recognisable.
  \item $L_3 = \{x \mid x \text{ starts with } 1\}$ is not
        permutation-closed: $\prof(10) = \prof(01) = (1,1)$ but
        $10 \in L_3$ and $01 \notin L_3$. Not recognisable.
  \item $L_4 = 0^*1^*$ is not permutation-closed:
        $\prof(01) = \prof(10) = (1,1)$ but $01 \in L_4$ and
        $10 \notin L_4$. Not recognisable.
\end{itemize}
\end{example}

\begin{remark}
\label{rem:binding}
Theorem~\ref{thm:caratterizzazione} shows that the finiteness of $Q$ is
not the binding constraint: it is the observer that determines the class.
The power of the machine cannot compensate for the weakness of the observer.
With an observer that maps everything to the same value, no machine, not
even a Turing machine, could recognise anything beyond
$\emptyset$ and $\Sigma^*$.

The transition function $\delta : Q \times \NN^k \to Q$ is not required
to be computable. The purpose of Theorem~\ref{thm:caratterizzazione} is
to characterise \emph{all} permutation-closed languages, which includes
non-recursive and non-r.e.\ languages
(Theorem~\ref{thm:due_assi}(ii)). Restricting $\delta$ to computable
functions would yield a strictly smaller class. The model is therefore a
\emph{mathematical} characterisation rather than a computational one.
\end{remark}

\begin{corollary}
\label{cor:trivial-obs}
For any machine $M$ and any language $L \notin \{\emptyset, \Sigma^*\}$,
a system equipped with the trivial observer $O_\bot : x \mapsto \star$
does not recognise $L$, regardless of the computational power of $M$.
\end{corollary}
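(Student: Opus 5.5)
The plan is to make precise what ``a system equipped with an observer $O$'' recognises, mirroring the proof of Theorem~\ref{thm:caratterizzazione}. Whatever the machine $M$ is (finite automaton, Turing machine, or an arbitrary function with no computability constraint), its decision on input $x$ depends on $x$ only through $O(x)$. Formally, the system computes a map $g : S \to \{0,1\}$, and the language it recognises is $\{x \in \Sigma^* \mid g(O(x)) = 1\}$. This is the same factorisation we used for $\hat\delta(q_0,x) = \delta(q_0,\prof(x))$ in~\eqref{eq:delta-hat}, with $\prof$ replaced by a general observer. I will state this factorisation explicitly as the modelling assumption, since it is exactly what Remark~\ref{rem:binding} asserts: the machine acts only on the observed value.

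With that in place, the argument is a direct analogue of the $(\Rightarrow)$ direction of Theorem~\ref{thm:caratterizzazione}. Take $O_\bot : x \mapsto \star$, so $S = \{\star\}$. Any recognised language has the form $L = \{x \mid g(\star) = 1\}$. If $g(\star) = 1$ then $L = \Sigma^*$, and if $g(\star) = 0$ then $L = \emptyset$. Equivalently, for any $x,y$ we have $O_\bot(x) = O_\bot(y)$, hence $x \in L \iff y \in L$, so $L$ is either empty or everything.

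The corollary then follows by contraposition: a language $L \notin \{\emptyset, \Sigma^*\}$ contains some $x$ and omits some $y$. Since $O_\bot(x) = O_\bot(y)$, no $g$ can separate them. Because $g$ was arbitrary, this holds regardless of the power of $M$.

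The only real obstacle is definitional rather than technical. The paper has so far formalised only the order-blind automaton, so I must fix the general notion of ``system with observer $O$'' as the factorisation through $O$. Once that is done, the proof is two lines. I would phrase it so that it also visibly specialises Theorem~\ref{thm:caratterizzazione}: with $O_\bot$, the ``saturated'' languages are those constant on the single fibre $\Sigma^*$.
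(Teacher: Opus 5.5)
Your proposal is correct and is essentially the paper's argument: pick $x \in L$ and $y \notin L$, note that $O_\bot(x) = O_\bot(y) = \star$, and conclude that acceptance, which depends only on the observed value, cannot separate them. Your explicit factorisation $L = \{x \mid g(O(x)) = 1\}$ is the same thing the paper later formalises as $L(M,O) = \{x \mid M \text{ accepts } O(x)\}$ in Definition~\ref{def:observer}, so your modelling assumption matches the paper's own definition.
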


\begin{proof}
The trivial observer identifies all strings, the only
$O_\bot$-decidable languages are $\emptyset$ and $\Sigma^*$.
Any $L \notin \{\emptyset, \Sigma^*\}$ contains some $x$ and omits some
$y$, yet $O_\bot(x) = O_\bot(y) = \star$, so no system receiving only
$O_\bot(x)$ can decide $L$ correctly on both inputs.
\end{proof}

\section{Theory of observers}
\label{sec:teoria}

\begin{definition}[Observer]
\label{def:observer}
Let $\Sigma$ be a finite alphabet and $S$ a set (the \emph{observation
space}). An \emph{observer} is a function $O : \Sigma^* \to S$.
A \emph{computational system with observer $O$} is a pair $(M, O)$
where $M$ is any computational model that receives $O(x)$ as its entire
input in place of $x$. The language recognised by $(M, O)$ is
\[
  L(M, O) = \{x \in \Sigma^* \mid M \text{ accepts } O(x)\}.
\]
\end{definition}

Every observer $O$ induces an equivalence relation on $\Sigma^*$:
\[
  x \sim_O y \;\iff\; O(x) = O(y).
\]

\begin{definition}[$O$-saturation]
\label{def:saturation}
A language $L \subseteq \Sigma^*$ is \emph{$O$-saturated} if
\[
  x \sim_O y \;\Rightarrow\; (x \in L \iff y \in L)
\]
for all $x, y \in \Sigma^*$. Equivalently, $L$ is a union of
$\sim_O$-equivalence classes. When $S$ is infinite we fix a canonical
injective encoding $\mathrm{enc} : S \to \Sigma^*$ and interpret
$\sim_O$ as the relation induced on the encoded outputs
$\mathrm{enc}(O(x))$.
\end{definition}

\begin{example}[$O_{\mathrm{prof}}$-saturated language]
\label{ex:sat-yes}
The language $L = \{x \in \{0,1\}^* \mid |x|_0 = |x|_1\}$ depends only
on $\prof(x) = (|x|_0, |x|_1)$. If $O_{\mathrm{prof}}(x) =
O_{\mathrm{prof}}(y)$ then $|x|_0 = |y|_0$ and $|x|_1 = |y|_1$, so
$x \in L \iff y \in L$. Hence $L$ is $O_{\mathrm{prof}}$-saturated.
\end{example}

\begin{example}[Non-$O_{\mathrm{prof}}$-saturated language]
\label{ex:sat-no}
The language $L_1 = \{x \in \{0,1\}^* \mid 01 \sqsubseteq x\}$ is not
$O_{\mathrm{prof}}$-saturated. The strings $01$ and $10$ satisfy
$O_{\mathrm{prof}}(01) = O_{\mathrm{prof}}(10) = (1,1)$, yet $01 \in L_1$
(the subsequence $01$ is present) and $10 \notin L_1$ (the only
length-$2$ subsequence of $10$ is $10$ itself).
\end{example}

\begin{definition}[Induced language class]
\label{def:L-O}
\[
  \cL(O) \coloneqq \{L \subseteq \Sigma^* \mid
  L \text{ is } O\text{-saturated (Definition~\ref{def:saturation})}\}.
\]
\end{definition}

\begin{convention}[Encoding for infinite observation spaces]
\label{conv:encoding}
When $S$ is infinite (e.g.\ $S = \NN^k$ for $O_{\mathrm{prof}}$, or
$S = \mathcal{P}(\Sigma^{\leq k})$ for $O_k$), we fix a canonical
injective binary encoding $\mathrm{enc} : S \to \{0,1\}^*$ with explicit
separators. All machines receiving $O(x)$ are standard Turing machines
operating on $\mathrm{enc}(O(x))$. The cost of encoding and decoding is
at most polynomial in $|\mathrm{enc}(O(x))|$ and does not affect the
polynomial-time separations in Section~\ref{sec:pvsnp}.
\end{convention}

\begin{definition}[Observational order]
\label{def:ordine}
Given two observers $O_1 : \Sigma^* \to S_1$ and
$O_2 : \Sigma^* \to S_2$, we say that $O_1$ is \emph{less powerful}
than $O_2$, written $O_1 \preceq O_2$, if there exists a function
$f : S_2 \to S_1$ such that $O_1 = f \circ O_2$. That is, every piece
of information $O_1$ can deliver is derivable from what $O_2$ delivers.
\end{definition}

\begin{proposition}
\label{prop:ordine}
The relation $\preceq$ is a preorder on observers. It becomes a partial
order when quotienting by the equivalence
$O_1 \simeq O_2 \iff O_1 \preceq O_2 \land O_2 \preceq O_1$.
\end{proposition}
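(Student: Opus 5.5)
The plan is to verify the two preorder axioms directly from Definition~\ref{def:ordine}, and then check that the quotient by $\simeq$ carries a well-defined antisymmetric relation.

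\emph{Reflexivity.} For any observer $O : \Sigma^* \to S$, take $f = \mathrm{id}_S$. Then $O = \mathrm{id}_S \circ O$, so $O \preceq O$.

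\emph{Transitivity.} Suppose $O_1 \preceq O_2$ and $O_2 \preceq O_3$, witnessed by $f : S_2 \to S_1$ with $O_1 = f \circ O_2$ and $g : S_3 \to S_2$ with $O_2 = g \circ O_3$. Then $O_1 = f \circ (g \circ O_3) = (f\circ g)\circ O_3$ by associativity of composition, so $f \circ g : S_3 \to S_1$ witnesses $O_1 \preceq O_3$. Hence $\preceq$ is a preorder.

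For the quotient, I will first use the standard fact that $\simeq$ is an equivalence relation. Reflexivity and transitivity are inherited from $\preceq$, and symmetry is built into the definition. I then define $[O_1] \preceq [O_2]$ if and only if $O_1 \preceq O_2$. Well-definedness is the one point that needs care. If $O_1 \simeq O_1'$ and $O_2 \simeq O_2'$, then $O_1' \preceq O_1 \preceq O_2 \preceq O_2'$, and transitivity gives $O_1' \preceq O_2'$. Reflexivity and transitivity pass directly to classes. Antisymmetry holds by construction: $[O_1]\preceq[O_2]$ and $[O_2]\preceq[O_1]$ mean $O_1 \simeq O_2$, that is, $[O_1]=[O_2]$.

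The only subtlety is foundational: observers with arbitrary codomains $S$ do not form a set, so I will either regard $\preceq$ as a relation on a class or restrict the codomains to subsets of a fixed universe. A cleaner alternative, which I would mention, is that $O_1 \preceq O_2$ holds exactly when $\sim_{O_2} \subseteq \sim_{O_1}$. Necessity is immediate. For sufficiency, define $f$ on $O_2(\Sigma^*)$ by $f(O_2(x)) = O_1(x)$, which is well defined by the inclusion, and extend $f$ arbitrarily outside the image; this requires $S_1 \neq \emptyset$, which holds because $\Sigma^*$ is nonempty. Under this reformulation, $\simeq$ classes correspond bijectively to equivalence relations on $\Sigma^*$, and the partial order is reverse inclusion. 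This sidesteps the size issue, and I expect it to be the only step beyond routine checking.
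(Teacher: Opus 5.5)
Your proof is correct and follows the paper's route: the identity as witness for reflexivity, the composite $f\circ g$ for transitivity, and antisymmetry on $\simeq$-classes holding by definition. You go slightly beyond the paper by checking that the induced relation on classes is well defined, and by noting the characterisation $O_1 \preceq O_2 \iff {\sim_{O_2}} \subseteq {\sim_{O_1}}$, which identifies the quotient with the equivalence relations on $\Sigma^*$ under reverse inclusion and so settles the size issue the paper leaves unaddressed.
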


\begin{proof}
\emph{Reflexivity}: $O \preceq O$ via $f = \mathrm{id}$.
\emph{Transitivity}: if $O_1 = f \circ O_2$ and $O_2 = g \circ O_3$,
then $O_1 = (f \circ g) \circ O_3$.
\emph{Antisymmetry in the quotient}: $[O_1] \preceq [O_2]$ and
$[O_2] \preceq [O_1]$ give $O_1 \simeq O_2$, hence $[O_1] = [O_2]$.
\end{proof}

\begin{convention}
We identify observers up to $\simeq$ throughout. In particular,
$\cL(O_1) = \cL(O_2)$ whenever $O_1 \simeq O_2$: applying
Proposition~\ref{prop:monotonia} to both $O_1 \preceq O_2$ and
$O_2 \preceq O_1$ gives equality.
\end{convention}

\begin{proposition}[Monotonicity]
\label{prop:monotonia}
$O_1 \preceq O_2 \;\Rightarrow\; \cL(O_1) \subseteq \cL(O_2)$.
\end{proposition}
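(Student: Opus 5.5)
The plan is to unfold the definitions and reduce everything to a comparison of the equivalence relations $\sim_{O_1}$ and $\sim_{O_2}$. By Definition~\ref{def:ordine}, $O_1 \preceq O_2$ supplies a function $f : S_2 \to S_1$ with $O_1 = f \circ O_2$. The first step is to show that this gives a refinement of equivalence relations, namely $\sim_{O_2} \subseteq \sim_{O_1}$. Indeed, if $x \sim_{O_2} y$, then $O_2(x) = O_2(y)$. Applying $f$ gives $O_1(x) = f(O_2(x)) = f(O_2(y)) = O_1(y)$, so $x \sim_{O_1} y$. In words, a more powerful observer has finer classes.

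The second step is to take an arbitrary $L \in \cL(O_1)$ and check that it is $O_2$-saturated in the sense of Definition~\ref{def:saturation}. Suppose $x \sim_{O_2} y$. The first step gives $x \sim_{O_1} y$. Since $L$ is $O_1$-saturated, this yields $x \in L \iff y \in L$. Hence $L$ is $O_2$-saturated, so $L \in \cL(O_2)$ by Definition~\ref{def:L-O}. Since $L$ was arbitrary, $\cL(O_1) \subseteq \cL(O_2)$. An equivalent way to phrase this is that every $\sim_{O_2}$-class lies inside a single $\sim_{O_1}$-class, so any union of $\sim_{O_1}$-classes is also a union of $\sim_{O_2}$-classes.

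The only point that needs care is the clause in Definition~\ref{def:saturation} that, for infinite $S$, reads $\sim_O$ through a fixed injective encoding $\mathrm{enc}$. I will note that injectivity means $\mathrm{enc}(O(x)) = \mathrm{enc}(O(y))$ holds exactly when $O(x) = O(y)$. So the encoded relation coincides with $\sim_O$, and the argument above goes through unchanged. Apart from this remark there is no real obstacle: the proof is a two-line diagram chase, and it does not depend on computability or finiteness of the observation spaces.
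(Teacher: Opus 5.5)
Your proof is correct and is essentially the paper's argument: from $O_1 = f \circ O_2$ you get that $O_2(x) = O_2(y)$ implies $O_1(x) = O_1(y)$, and $O_1$-saturation of $L$ then gives $O_2$-saturation. Your extra remark, that injectivity of $\mathrm{enc}$ makes the encoded relation coincide with $\sim_O$, is a harmless tidy-up that the paper leaves implicit.
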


\begin{proof}
Let $L \in \cL(O_1)$, so $L$ is $O_1$-saturated
(Definition~\ref{def:saturation}). If $O_2(x) = O_2(y)$, then
$O_1(x) = f(O_2(x)) = f(O_2(y)) = O_1(y)$, so $x \sim_{O_1} y$.
By $O_1$-saturation of $L$, $x \in L \iff y \in L$.
Hence $L$ is $O_2$-saturated, giving $L \in \cL(O_2)$.
\end{proof}

\begin{definition}[Extremes of the order]
\begin{itemize}
  \item The \emph{trivial observer} $O_\bot : x \mapsto \star$ (constant)
        satisfies $O_\bot \preceq O$ for every $O$.
        $\cL(O_\bot) = \{\emptyset, \Sigma^*\}$.
  \item The \emph{complete observer} $O_\top : x \mapsto x$ (identity)
        satisfies $O \preceq O_\top$ for every $O$.
        $\cL(O_\top)$ is the class of all languages.
\end{itemize}
\end{definition}

\begin{remark}[Connection with information theory]
Proposition~\ref{prop:monotonia} has the form of the data processing
inequality. In the stochastic setting, in which observers map strings to
probability distributions $O : \Sigma^* \to \Delta(S)$, the order
$\preceq$ becomes \emph{channel degradation} in the sense of
Blackwell~\cite{blackwell1953}: a channel $P$ is degraded with respect to
$Q$ if $P(\cdot \mid x) = \int K(\cdot \mid y)\,\mathrm{d}Q(y \mid x)$
for some stochastic kernel $K$, the stochastic analogue of $O_1 = f \circ
O_2$. The deterministic observers of this paper are the limit in which $P$
and $Q$ are point masses; the general theory of~\cite{cover2006} provides
the informational bridge. The stochastic case is an open direction
(Section~\ref{sec:complessita}).
\end{remark}

\medskip
Having isolated $O$-saturation as the structural property induced by
observational equivalence (Definition~\ref{def:saturation}), we can now
develop the observational hierarchy and, in Section~\ref{sec:pvsnp},
the complexity theoretic framework built upon it.

\section{The observational hierarchy}
\label{sec:gerarchia}

\begin{definition}[Canonical observers]
\label{def:osservatori_canonici}
Let $\Sigma = \{0,1\}$.
\begin{enumerate}[label=(\roman*)]
  \item $O_\bot(x) = \star$ \hfill (trivial observer)
  \item $O_{\mathrm{len}}(x) = |x|$ \hfill (length only)
  \item $O_{\mathrm{par}}(x) = (|x|_0 \bmod 2,\, |x|_1 \bmod 2)$
        \hfill (parities)
  \item $O_{\mathrm{prof}}(x) = (|x|_0, |x|_1)$ \hfill (full profile)
  \item $O_k(x) = \text{set of subsequences of } x \text{ of length}
        \leq k$ \hfill ($k \geq 1$)
  \item $O_\top(x) = x$ \hfill (complete observer)
\end{enumerate}
\end{definition}

\begin{remark}
The restriction to $\Sigma = \{0,1\}$ is without loss of generality for
all structural results in this section. The order relations among
$O_\bot$, $O_{\mathrm{len}}$, $O_{\mathrm{par}}$, $O_{\mathrm{prof}}$,
$O_\top$ extend immediately to any finite alphabet
$\Sigma = \{a_1, \dots, a_k\}$ by replacing $(|x|_0, |x|_1)$ with
$(|x|_{a_1}, \dots, |x|_{a_k})$; the separating witnesses carry over
with the same arguments.
\end{remark}

\begin{proposition}[Profile sub-lattice]
\label{prop:ramo_prof}
The following strict orderings hold:
\begin{align*}
  O_\bot \prec O_{\mathrm{len}} \prec O_{\mathrm{prof}} \prec O_\top
    &\tag{length branch} \\
  O_\bot \prec O_{\mathrm{par}} \prec O_{\mathrm{prof}} \prec O_\top
    &\tag{parity branch}
\end{align*}
and $O_{\mathrm{len}}$ and $O_{\mathrm{par}}$ are incomparable:
$O_{\mathrm{len}} \not\preceq O_{\mathrm{par}}$ and
$O_{\mathrm{par}} \not\preceq O_{\mathrm{len}}$.
\end{proposition}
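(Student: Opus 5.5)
Each relation $O_1 \preceq O_2$ is witnessed by an explicit factoring map $f$ with $O_1 = f \circ O_2$, as in Definition~\ref{def:ordine}. Each failure $O_1 \not\preceq O_2$ is witnessed by a pair of strings $x, y$ with $O_2(x) = O_2(y)$ but $O_1(x) \neq O_1(y)$. No such $f$ can exist then, since $f(O_2(x)) = f(O_2(y))$ would force $O_1(x) = O_1(y)$. Strictness $O_1 \prec O_2$ means $O_1 \preceq O_2$ together with $O_2 \not\preceq O_1$, so every claim reduces to one factoring map or one separating pair.

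\textbf{Non-strict inequalities.} $O_\bot \preceq O$ (constant map) and $O \preceq O_\top$ (take $f = O$ itself) hold for every $O$ by the definition of the extremes. $O_{\mathrm{len}} \preceq O_{\mathrm{prof}}$ via $f(c_0,c_1) = c_0 + c_1$. $O_{\mathrm{par}} \preceq O_{\mathrm{prof}}$ via $f(c_0,c_1) = (c_0 \bmod 2,\, c_1 \bmod 2)$.

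\textbf{Strictness.}
\begin{itemize}
\item $O_{\mathrm{len}} \not\preceq O_\bot$ and $O_{\mathrm{par}} \not\preceq O_\bot$: the pair $\varepsilon, 0$ has the same $O_\bot$ value but different length $(0$ vs.\ $1)$ and different parity vector $((0,0)$ vs.\ $(1,0))$.
\item $O_{\mathrm{prof}} \not\preceq O_{\mathrm{len}}$: use $0$ and $1$, which have equal length but profiles $(1,0) \neq (0,1)$.
\item $O_{\mathrm{prof}} \not\preceq O_{\mathrm{par}}$: use $\varepsilon$ and $00$, which have parity vector $(0,0)$ but profiles $(0,0) \neq (2,0)$.
\item $O_\top \not\preceq O_{\mathrm{prof}}$: use $01$ and $10$, as in Example~\ref{ex:sat-no}.
\end{itemize}

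\textbf{Incomparability.} $O_{\mathrm{len}} \not\preceq O_{\mathrm{par}}$ follows from the same pair $\varepsilon, 00$: equal parities, lengths $0 \neq 2$. $O_{\mathrm{par}} \not\preceq O_{\mathrm{len}}$ follows from $0$ and $1$: equal length, parity vectors $(1,0) \neq (0,1)$.

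\textbf{Lifting to the quotient.} Since observers are identified up to $\simeq$, each strict relation is a strict relation between equivalence classes, because $O_2 \not\preceq O_1$ rules out $O_1 \simeq O_2$. Transitivity (Proposition~\ref{prop:ordine}) is not even needed, since each link is checked directly.

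The main obstacle is only in choosing witnesses that do double duty. For example, $\varepsilon$ versus $00$ must collapse under parity while differing in both length and profile, and each witness should be checked against the exact definitions of the observers. The title's ``sub-lattice'' would additionally require $O_{\mathrm{prof}}$ to be the join of $O_{\mathrm{len}}$ and $O_{\mathrm{par}}$. That is false in general, since $(|x|, \text{parities})$ does not determine the profile ($01$ versus $0000\,11$-type examples with equal length and parities). The statement as written asserts only the orderings and the incomparability, so I would prove exactly those and not claim a join.
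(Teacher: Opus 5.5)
Your proof is correct and shares the paper's skeleton. You use the same factoring maps for the non-strict relations, and the same separating-pair argument for the incomparability of $O_{\mathrm{len}}$ and $O_{\mathrm{par}}$.

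The real difference is how strictness is certified. The paper exhibits a witness language in $\cL(O_2)\setminus\cL(O_1)$, e.g.\ $\{x \mid |x|_1 \geq 3\}$ with $011$, $01111$ for $O_{\mathrm{par}} \prec O_{\mathrm{prof}}$. This proves strict inclusion of the induced classes and yields $O_2 \not\preceq O_1$ only via the contrapositive of Proposition~\ref{prop:monotonia}. You instead refute $O_2 \preceq O_1$ directly from Definition~\ref{def:ordine} with a single pair of strings.

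The two routes are equivalent. A pair with $O_1(x)=O_1(y)$ and $O_2(x)\neq O_2(y)$ yields $\{z \mid O_2(z)=O_2(x)\} \in \cL(O_2)\setminus\cL(O_1)$, exactly as in Theorem~\ref{thm:separazione}. Yours is more uniform and lets pairs such as $\varepsilon, 00$ do double duty. The paper's version records the class-level separation that the hierarchy of language classes is ultimately about.

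One slip in your closing aside: $01$ and $000011$ have different lengths. To see that $(|x|,\text{parities})$ does not determine the profile, use $0000$ versus $0011$ (both of length $4$ with parities $(0,0)$). Your conclusion still stands, and the meet fails too. The observer $x \mapsto |x| \bmod 2$ lies below both $O_{\mathrm{len}}$ and $O_{\mathrm{par}}$ but strictly above $O_\bot$. So the diamond is a lattice only as a poset of the canonical observers, not as a sub-lattice of the full observer order.
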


\begin{proof}
\textbf{Length branch.}

\smallskip\noindent
\emph{$O_\bot \prec O_{\mathrm{len}}$.}
Inclusion: via $f : \NN \to \{\star\}$ constant.
Strictness: $L = \{x \mid |x| \text{ even}\}$ is in
$\cL(O_{\mathrm{len}}) \setminus \cL(O_\bot)$.

\smallskip\noindent
\emph{$O_{\mathrm{len}} \prec O_{\mathrm{prof}}$.}
Inclusion: via $f(c_0, c_1) = c_0 + c_1$, recovering $|x|$.
Strictness: $L = \{x \mid |x|_1 \equiv 0 \pmod{2}\}$ satisfies
$O_{\mathrm{len}}(01) = O_{\mathrm{len}}(11) = 2$ while $01 \notin L$
($|01|_1 = 1$, odd) and $11 \in L$ ($|11|_1 = 2$, even), so
$L \notin \cL(O_{\mathrm{len}})$.

\smallskip\noindent
\emph{$O_{\mathrm{prof}} \prec O_\top$.}
Inclusion: via $f(x) = \prof(x)$.
Strictness: $L = 0^*1^*$ satisfies $\prof(01) = \prof(10) = (1,1)$
while $01 \in L$ and $10 \notin L$
(Definition~\ref{def:saturation}), so $L \notin \cL(O_{\mathrm{prof}})$.

\medskip\noindent
\textbf{Parity branch.}

\smallskip\noindent
\emph{$O_\bot \prec O_{\mathrm{par}}$.}
Inclusion: via the constant map. Strictness: $\{x \mid |x|_1 \equiv 0
\pmod{2}\} \in \cL(O_{\mathrm{par}}) \setminus \cL(O_\bot)$.

\smallskip\noindent
\emph{$O_{\mathrm{par}} \prec O_{\mathrm{prof}}$.}
Inclusion: via $f(c_0, c_1) = (c_0 \bmod 2, c_1 \bmod 2)$.
Strictness: let $L = \{x \mid |x|_1 \geq 3\}$. The strings $011$
and $01111$ satisfy $O_{\mathrm{par}}(011) = (1, 0)$ (since
$|011|_0 = 1$, $|011|_1 = 2$) and $O_{\mathrm{par}}(01111) = (1, 0)$
(since $|01111|_0 = 1$, $|01111|_1 = 4$), yet $011 \notin L$
($|011|_1 = 2 < 3$) and $01111 \in L$ ($|01111|_1 = 4 \geq 3$).
Hence $L$ is not $O_{\mathrm{par}}$-saturated
(Definition~\ref{def:saturation}), so $L \notin \cL(O_{\mathrm{par}})$.

\smallskip\noindent
\emph{$O_{\mathrm{prof}} \prec O_\top$.}
As in the length branch.

\medskip\noindent
\textbf{Incomparability $O_{\mathrm{len}} \not\preceq O_{\mathrm{par}}$
and $O_{\mathrm{par}} \not\preceq O_{\mathrm{len}}$.}

\smallskip\noindent
\emph{$O_{\mathrm{par}} \not\preceq O_{\mathrm{len}}$.}
$O_{\mathrm{len}}(01) = O_{\mathrm{len}}(11) = 2$, but
$O_{\mathrm{par}}(01) = (1,1) \neq (0,0) = O_{\mathrm{par}}(11)$.
No $f$ with $O_{\mathrm{par}} = f \circ O_{\mathrm{len}}$ can exist.

\smallskip\noindent
\emph{$O_{\mathrm{len}} \not\preceq O_{\mathrm{par}}$.}
For any $j \geq 0$, $O_{\mathrm{par}}(0^{2j}) =
O_{\mathrm{par}}(0^{2j+2}) = (0,0)$, but
$O_{\mathrm{len}}(0^{2j}) = 2j \neq 2j+2 =
O_{\mathrm{len}}(0^{2j+2})$. No such $f$ exists.
\end{proof}

\begin{proposition}[Subsequence branch]
\label{prop:ramo_ok}
$O_\bot \prec O_1 \prec O_2 \prec \cdots \prec O_\top$.
\end{proposition}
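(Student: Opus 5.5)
Following the pattern of Proposition~\ref{prop:ramo_prof}, each link splits into an \emph{inclusion} part, shown by exhibiting a map $f$ with $O_{\text{lower}} = f \circ O_{\text{upper}}$ (Definition~\ref{def:ordine}), and a \emph{strictness} part, shown by exhibiting a language in $\cL(O_{\text{upper}}) \setminus \cL(O_{\text{lower}})$. Strictness at the level of observers also follows from the non-existence of a reverse factorisation: if $O_{\text{upper}} \preceq O_{\text{lower}}$ held, Proposition~\ref{prop:monotonia} would give equality of the induced classes. For $O_\bot \prec O_1$, inclusion is the constant map. Strictness is witnessed by $L = \{x \mid 0 \sqsubseteq x\}$: it depends only on the set of length-$\le 1$ subsequences, so it lies in $\cL(O_1)$, and it is neither $\emptyset$ nor $\Sigma^*$, so it lies outside $\cL(O_\bot)$. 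For $O_k \preceq O_\top$, inclusion is the map $x \mapsto O_k(x)$ itself.

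For $O_k \preceq O_{k+1}$, take $f(T) = \{u \in T \mid |u| \le k\}$. Then $f(O_{k+1}(x)) = O_k(x)$ directly from Definition~\ref{def:osservatori_canonici}(v).

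The main work is strictness of $O_k \prec O_{k+1}$. I need two strings $x, y$ with the same subsequences of length $\le k$ but different subsequences of length $\le k+1$; the language $L = \{z \mid \mathrm{sub}_{\le k+1}(z) = \mathrm{sub}_{\le k+1}(x)\}$ is then $O_{k+1}$-saturated and not $O_k$-saturated. I would try $x = (01)^k$ and $y = (01)^k 0$. Every length-$\le k$ word over $\{0,1\}$ embeds in $(01)^k$: read one block $01$ per letter and pick the needed letter from it. Appending a $0$ only adds subsequences, and any subsequence of $y$ that uses the final $0$ still has length $\le k$ and embeds in $(01)^k$, so $O_k(x) = O_k(y) = \{0,1\}^{\le k}$. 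At length $k+1$, the word $0^{k+1}$ embeds in $y$ but not in $x$, since $x$ contains only $k$ zeros. Hence $O_{k+1}(x) \ne O_{k+1}(y)$.

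It remains to show $O_k \prec O_\top$ for every $k$, i.e.\ $O_\top \not\preceq O_k$. By the previous step there are distinct $x \ne y$ with $O_k(x) = O_k(y)$, so no $f$ can satisfy $\mathrm{id} = f \circ O_k$. Alternatively, transitivity through $O_{k+1}$ gives this strictness for free. The only step needing care is the embedding claim for $(01)^k$, a simple greedy argument; I expect no further obstacle. The phrase ``converging to $O_\top$'' is not part of this statement, so I would not prove any limit property here. At most I would remark that for each fixed $x$, $O_{|x|}(x)$ determines $x$ as its unique longest element.
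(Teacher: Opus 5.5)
Your proposal is correct and uses the same decomposition as the paper: a factorising map for each inclusion, and for each strict step a pair of strings identified by the weaker observer but separated by the stronger one. The differences are only in the witnesses: the paper uses the simpler pair $0^k, 0^{k+1}$ with $L_k = \{x \mid 0^{k+1} \sqsubseteq x\}$, and the language $\{0^n1^n \mid n \geq 0\}$ for $O_k \prec O_\top$; your truncation map $f(T) = \{u \in T \mid |u| \le k\}$ supplies the inclusion $O_k \preceq O_{k+1}$, which the paper's proof leaves implicit.
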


\begin{proof}
\emph{$O_\bot \prec O_1$.} $O_\bot(0) = O_\bot(\varepsilon) = \star$
while $O_1(0) \neq O_1(\varepsilon)$, so the inclusion $\cL(O_\bot)
\subseteq \cL(O_1)$ is strict.

\medskip\noindent
\emph{$O_k \prec O_{k+1}$ for every $k \geq 1$.}
The language $L_k = \{x \mid 0^{k+1} \sqsubseteq x\}$ is in
$\cL(O_{k+1})$, since membership is determined directly by $O_{k+1}(x)$.
It is not in $\cL(O_k)$: every subsequence of $0^k$ of length $\leq k$
is some $0^j$ with $j \leq k$, which is also a subsequence of $0^{k+1}$,
and vice versa, so $O_k(0^k) = O_k(0^{k+1})$. Yet $0^k \notin L_k$
and $0^{k+1} \in L_k$, so $L_k$ is not $O_k$-saturated
(Definition~\ref{def:saturation}).

\medskip\noindent
\emph{$O_k \prec O_\top$ for every finite $k$.}
Inclusion via $f(x) = O_k(x)$. Strictness: $O_k$ identifies $0^n 1^n$
with $0^m 1^m$ for all $n, m > k$, since both strings contain the same
set of subsequences of length $\leq k$. Hence
$\{0^n 1^n \mid n \geq 0\} \notin \cL(O_k)$ for any finite $k$,
while $O_\top$ recognises it.
\end{proof}

\begin{proposition}[Cross-branch incomparability]
\label{prop:incomparabilita}
For every $k \geq 1$, $O_{\mathrm{prof}}$ and $O_k$ are incomparable.
Consequently, $O_{\mathrm{len}}$ and $O_{\mathrm{par}}$ are also
incomparable with every $O_k$.
\end{proposition}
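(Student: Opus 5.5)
The plan is to prove each non-relation directly from Definition~\ref{def:ordine}. To show $O \not\preceq O'$, it suffices to exhibit strings $x,y$ with $O'(x) = O'(y)$ but $O(x) \neq O(y)$, since then no $f$ with $O = f \circ O'$ can exist. This is the same technique used for the incomparability of $O_{\mathrm{len}}$ and $O_{\mathrm{par}}$ in Proposition~\ref{prop:ramo_prof}.

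\emph{$O_{\mathrm{prof}} \not\preceq O_k$.} I reuse the witness from Proposition~\ref{prop:ramo_ok}: $O_k(0^k) = O_k(0^{k+1})$, whereas $\prof(0^k) = (k,0) \neq (k+1,0) = \prof(0^{k+1})$. The same pair settles $O_{\mathrm{len}} \not\preceq O_k$, because the lengths are $k \neq k+1$. It also settles $O_{\mathrm{par}} \not\preceq O_k$, because $k$ and $k+1$ have different parities, so the first components of $O_{\mathrm{par}}$ differ.

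\emph{$O_k \not\preceq O_{\mathrm{prof}}$.} I use $01$ and $10$, which share the profile $(1,1)$. For $k \geq 2$, however, $01 \in O_k(01)$ and $01 \notin O_k(10)$, so the two observations differ. Since $O_{\mathrm{len}}$ and $O_{\mathrm{par}}$ are functions of $\prof$, the same pair also gives $O_k \not\preceq O_{\mathrm{len}}$ and $O_k \not\preceq O_{\mathrm{par}}$ for $k \geq 2$. This direction is not a formal consequence of the first one, so the word ``consequently'' is justified only by reusing these explicit witnesses.

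\emph{Main obstacle: the case $k=1$.} Here the argument genuinely breaks down. The observation $O_1(x) = \{\varepsilon\} \cup \{a \in \Sigma : |x|_a > 0\}$ is a function of $\prof(x)$, namely of which counts are positive. Hence $O_1 \preceq O_{\mathrm{prof}}$, and in fact $O_1 \prec O_{\mathrm{prof}}$, witnessed by $0$ versus $00$. The claim that $O_{\mathrm{prof}}$ and $O_k$ are incomparable therefore holds only for $k \geq 2$, and I would state the proposition with that correction.

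The secondary claims nevertheless survive at $k=1$ with new witnesses:
\begin{itemize}
  \item $O_1 \not\preceq O_{\mathrm{len}}$ via $0$ versus $1$: both have length $1$, but $O_1(0) = \{\varepsilon,0\} \neq \{\varepsilon,1\} = O_1(1)$.
  \item $O_1 \not\preceq O_{\mathrm{par}}$ via $\varepsilon$ versus $00$: both have parities $(0,0)$, but $O_1(\varepsilon) = \{\varepsilon\} \neq \{\varepsilon,0\} = O_1(00)$.
\end{itemize}
Combined with the first paragraph, $O_{\mathrm{len}}$ and $O_{\mathrm{par}}$ are thus incomparable with every $O_k$, $k \geq 1$.
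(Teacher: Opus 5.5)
Your proposal is correct, including its central claim that the statement fails at $k=1$. Indeed $O_1(x)=\{\varepsilon\}\cup\{a : |x|_a>0\}$ factors through $\prof$, and the pair $0$, $00$ shows $O_1\prec O_{\mathrm{prof}}$.

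The paper's proof misses this because its argument for $O_k\not\preceq O_{\mathrm{prof}}$ is flawed. It separates $0^k1$ from $10^k$ using the subsequence $0^k1$, but that subsequence has length $k+1$ and is not recorded by $O_k$. At $k=1$ the two observations coincide: $O_1(01)=O_1(10)=\{\varepsilon,0,1\}$. For $k\geq 2$ the paper's pair does work, but only through the length-$2$ subsequence $01$, which is exactly your witness.

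Your handling of the parity claims is also sounder than the paper's. The paper asserts $O_k(0)=O_k(00)$ for all $k\geq 1$, which is false for $k\geq 2$ since $00\in O_k(00)$. For $O_k\not\preceq O_{\mathrm{par}}$, the paper exhibits strings identified by $O_k$ but separated by $O_{\mathrm{par}}$, which proves the opposite non-relation. Your single pair $0^k, 0^{k+1}$ settles $O_{\mathrm{prof}}, O_{\mathrm{len}}, O_{\mathrm{par}}\not\preceq O_k$ uniformly in $k$.

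Three refinements:
\begin{itemize}
\item Your $k=1$ witnesses ($0$ versus $1$, and $\varepsilon$ versus $00$) work for every $k\geq 1$. The case split is therefore needed only for the $O_{\mathrm{prof}}$ claim.
\item $O_k\not\preceq O_{\mathrm{len}}$ and $O_k\not\preceq O_{\mathrm{par}}$ \emph{are} formal consequences of $O_k\not\preceq O_{\mathrm{prof}}$, by transitivity with $O_{\mathrm{len}},O_{\mathrm{par}}\preceq O_{\mathrm{prof}}$. That is exactly what your ``functions of $\prof$'' sentence uses. The half that does not follow from $O_{\mathrm{prof}}\not\preceq O_k$ is $O_{\mathrm{len}}, O_{\mathrm{par}}\not\preceq O_k$, and that half needs its own witness.
\item Your correction propagates. $O_1$ lies strictly between $O_\bot$ and $O_{\mathrm{prof}}$ and is incomparable with both $O_{\mathrm{len}}$ and $O_{\mathrm{par}}$. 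So items (v) and (vii) of Proposition~\ref{prop:relazioni_complete} must be amended, as must the figure caption's claim that the branches are comparable at no level. Moreover, $O_1$ answers the question that the dimensional-asymmetry remark leaves open.
\end{itemize}
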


\begin{proof}
\emph{$O_k \not\preceq O_{\mathrm{prof}}$.}
The strings $x = 0^k 1$ and $y = 10^k$ satisfy
$\prof(x) = \prof(y) = (k,1)$, so any $f$ with
$O_k = f \circ O_{\mathrm{prof}}$ would require $O_k(x) = O_k(y)$.
But $0^k 1$ is a subsequence of $x$ of length $k+1$ and not of $y$
(since in $y$ the symbol $1$ precedes all zeros), so
$O_k(x) \neq O_k(y)$: no such $f$ exists.

\smallskip\noindent
\emph{$O_{\mathrm{prof}} \not\preceq O_k$.}
$O_k(0^{k+1}) = O_k(0^{k+2})$ (same subsequences of zeros of length
$\leq k$), but $O_{\mathrm{prof}}(0^{k+1}) = (k+1,0) \neq (k+2,0) =
O_{\mathrm{prof}}(0^{k+2})$: no $f$ with
$O_{\mathrm{prof}} = f \circ O_k$ exists.

\smallskip\noindent
\emph{$O_{\mathrm{len}}$ incomparable with every $O_k$.}
$O_k \not\preceq O_{\mathrm{len}}$: $O_{\mathrm{len}}(0) =
O_{\mathrm{len}}(1) = 1$ but $O_1(0) \neq O_1(1)$.
$O_{\mathrm{len}} \not\preceq O_k$: $O_k(0^{k+1}) = O_k(0^{k+2})$
but $O_{\mathrm{len}}(0^{k+1}) \neq O_{\mathrm{len}}(0^{k+2})$.

\smallskip\noindent
\emph{$O_{\mathrm{par}}$ incomparable with every $O_k$.}
$O_{\mathrm{par}} \not\preceq O_k$: $O_k(0) = O_k(00) =
\{\varepsilon, 0\}$ for $k \geq 1$, but $O_{\mathrm{par}}(0) = (1,0)
\neq (0,0) = O_{\mathrm{par}}(00)$.
$O_k \not\preceq O_{\mathrm{par}}$: $O_k(0^{k+1}) = O_k(0^{k+2})$
but $O_{\mathrm{par}}(0^{k+1}) = ((k+1) \bmod 2, 0) \neq
((k+2) \bmod 2, 0) = O_{\mathrm{par}}(0^{k+2})$.

\smallskip\noindent
\emph{$O_{\mathrm{len}}$ and $O_{\mathrm{par}}$ are incomparable.}
This is Proposition~\ref{prop:ramo_prof}.
\end{proof}

\begin{proposition}[Complete order relations]
\label{prop:relazioni_complete}
Among the canonical observers of
Definition~\ref{def:osservatori_canonici}, the complete set of order
relations is:
\begin{enumerate}[label=(\roman*)]
  \item Length branch:
        $O_\bot \prec O_{\mathrm{len}} \prec O_{\mathrm{prof}} \prec
        O_\top$
  \item Parity branch:
        $O_\bot \prec O_{\mathrm{par}} \prec O_{\mathrm{prof}} \prec
        O_\top$
  \item $O_{\mathrm{len}}$ and $O_{\mathrm{par}}$ are incomparable
  \item Subsequence branch:
        $O_\bot \prec O_1 \prec O_2 \prec \cdots \prec O_\top$
  \item $O_{\mathrm{prof}}$ is incomparable with every $O_k$
  \item $O_{\mathrm{len}}$ and $O_{\mathrm{par}}$ are each incomparable
        with every $O_k$
  \item No further comparabilities exist among these observers
\end{enumerate}
\end{proposition}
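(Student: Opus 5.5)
Items (i)--(vi) are already established, so the plan is to assemble them and then prove (vii) by exhausting the finitely many kinds of pairs. Items (i)--(iii) are exactly Proposition~\ref{prop:ramo_prof}, item (iv) is Proposition~\ref{prop:ramo_ok}, and items (v)--(vi) are Proposition~\ref{prop:incomparabilita}. The substance of (vii) is that no pair of canonical observers is comparable other than as listed, and no listed strict relation collapses to $\simeq$.

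I will organise (vii) by classifying every pair $(O,O')$ from $\{O_\bot,O_{\mathrm{len}},O_{\mathrm{par}},O_{\mathrm{prof}},O_\top\}\cup\{O_k : k\ge1\}$.

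\emph{Pairs involving $O_\bot$ or $O_\top$.} These are comparable by the extremes of the order, and strictly so. For $O_\bot$ strictness holds because each other observer separates some pair of strings. For $O_\top$ it holds because each other observer identifies two distinct strings: $01,10$ for the profile-type observers, and $0^{k+1},0^{k+2}$ for $O_k$.

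\emph{Pairs inside the profile diamond.} These are settled by Proposition~\ref{prop:ramo_prof}, together with transitivity from Proposition~\ref{prop:ordine}.

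\emph{Pairs inside the subsequence chain.} These are settled by Proposition~\ref{prop:ramo_ok} and transitivity, giving $O_j\prec O_k$ for $j<k$. The reverse relation $O_k\not\preceq O_j$ should be argued directly, not read off the strict inclusion. Here I reuse the witness $O_j(0^j)=O_j(0^{j+1})$ while $O_k(0^j)\neq O_k(0^{j+1})$ for $k>j$.

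\emph{Mixed pairs, one profile-type and one $O_k$.} These are Proposition~\ref{prop:incomparabilita}. I will check that its witnesses really cover all $k$ uniformly. For $O_k\not\preceq O_{\mathrm{len}}$ it uses $O_1(0)\neq O_1(1)$; I need the same separation for $O_k$, which holds since $0\in O_k(0)\setminus O_k(1)$.

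Strictness must be stated at the level of observers. Each claimed $\prec$ needs both $\preceq$ and $\not\succeq$. I will phrase every separation as a pair of strings identified by the weaker observer and distinguished by the stronger, since that directly refutes any factorisation $f$. This matters because Proposition~\ref{prop:monotonia} only gives one direction: strictness of language classes implies non-equivalence of observers, and the proof should make that inference explicit.

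The main obstacle is clarifying what ``no further comparabilities'' means. Relations derivable by transitivity, such as $O_{\mathrm{len}}\prec O_\top$, are not ``further'' ones. So I will state (vii) as: for every pair, $O\preceq O'$ holds if and only if it follows from (i), (ii), (iv) by reflexivity and transitivity. The proof then reduces to checking the non-relations listed above. The only slightly delicate cases are infinitely many pairs within the $O_k$ chain and across branches, which the witnesses $0^{k+1},0^{k+2}$ and $0^k1,10^k$ handle uniformly in $k$.
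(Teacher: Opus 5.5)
Your plan has the same structure as the paper's proof: cite Propositions~\ref{prop:ramo_prof}, \ref{prop:ramo_ok} and \ref{prop:incomparabilita} for (i)--(vi), then settle (vii) by exhausting pairs. The exhaustive check breaks at one pair, because the statement is false there. The paper's own proof has the same defect, since it simply cites Proposition~\ref{prop:incomparabilita}.

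You assert that the witnesses $0^k1, 10^k$ separate $O_k$ from $O_{\mathrm{prof}}$ uniformly in $k$. But the separating subsequence $0^k1$ has length $k+1$, so $O_k$ cannot see it.
\begin{itemize}
\item For $k\geq 2$ the pair still works, via the length-$2$ subsequence $01$, which lies in $O_k(0^k1)$ but not in $O_k(10^k)$.
\item For $k=1$ it fails: $O_1(01)=O_1(10)=\{\varepsilon,0,1\}$. No other witness can exist, because $O_1(x)=\{\varepsilon\}\cup\{a : |x|_a\geq 1\}$ only records which letters occur. Hence $O_1=f\circ O_{\mathrm{prof}}$ with $f(c_0,c_1)=\{\varepsilon\}\cup\{0 : c_0\geq 1\}\cup\{1 : c_1\geq 1\}$.
\end{itemize}
Combined with $O_1(0)=O_1(00)$ but $\prof(0)\neq\prof(00)$, this gives $O_1\prec O_{\mathrm{prof}}$ strictly. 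Equivalently, $\cL(O_1)\subsetneq\cL(O_{\mathrm{prof}})$: a language depending only on which letters occur is permutation-closed.

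So item (v) is false for $k=1$. Your precise form of (vii), that every relation follows from (i), (ii), (iv) by reflexivity and transitivity, is refuted by the pair $(O_1,O_{\mathrm{prof}})$. What your method can actually prove is (i)--(iv), (vi), (v) restricted to $k\geq 2$, and (vii) with the one additional relation $O_1\prec O_{\mathrm{prof}}$. $O_1$ does remain incomparable with $O_{\mathrm{len}}$ and $O_{\mathrm{par}}$.

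Second, by taking the cross-branch pairs wholesale from Proposition~\ref{prop:incomparabilita}, you also inherit a wrongly directed witness.
\begin{itemize}
\item The argument given there for $O_k\not\preceq O_{\mathrm{par}}$ is $O_k(0^{k+1})=O_k(0^{k+2})$ with different parities. That proves $O_{\mathrm{par}}\not\preceq O_k$ again, not the claimed direction.
\item The witness offered for $O_{\mathrm{par}}\not\preceq O_k$, namely $O_k(0)=O_k(00)$, holds only for $k=1$, since $00\in O_k(00)$ once $k\geq 2$.
\end{itemize}
The claim $O_k\not\preceq O_{\mathrm{par}}$ is true, but it needs two strings with the same parity vector and different $O_k$-values. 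For example, $\varepsilon$ and $00$ both give $(0,0)$, while $0\in O_k(00)\setminus O_k(\varepsilon)$.

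Your stated policy of phrasing each separation as a pair of strings identified by one observer and distinguished by the other is the right one. Carrying it out for both directions of each cross-branch pair, and for every $k$ including $k=1$, is exactly what exposes both problems.
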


\begin{proof}
Items (i)--(iii): Proposition~\ref{prop:ramo_prof}.
Item (iv): Proposition~\ref{prop:ramo_ok}.
Items (v)--(vi): Proposition~\ref{prop:incomparabilita}.
Item (vii): every pair has been checked in (i)--(vi) and in the
definitions of $O_\bot$ and $O_\top$.
\end{proof}

\bigskip
\begin{center}
\begin{tikzpicture}[
  node distance=1.4cm,
  every node/.style={draw, rounded corners, minimum width=2.2cm,
                     minimum height=0.7cm, align=center, font=\small}
]
  \node (bot)  {$O_\bot$};
  \node (len)  [above left=1.4cm and 2.0cm of bot]  {$O_{\mathrm{len}}$};
  \node (par)  [above right=1.4cm and 0.4cm of bot] {$O_{\mathrm{par}}$};
  \node (prof) [above=3.2cm of bot]  {$O_{\mathrm{prof}}$};
  \node (O1)   [above right=1.2cm and 3.6cm of bot] {$O_1$};
  \node (O2)   [above=of O1]   {$O_2$};
  \node (Odots)[above=of O2, draw=none] {$\vdots$};
  \node (top)  [above=5.8cm of bot] {$O_\top$};

  \draw[->] (bot)  -- (len);
  \draw[->] (bot)  -- (par);
  \draw[->] (len)  -- (prof);
  \draw[->] (par)  -- (prof);
  \draw[->] (prof) -- (top);
  \draw[->] (bot)  -- (O1);
  \draw[->] (O1)   -- (O2);
  \draw[->] (O2)   -- (Odots);
  \draw[->] (Odots) -- (top);
\end{tikzpicture}
\end{center}

\smallskip
\begin{center}
\small
The profile sub-lattice (left) forms a diamond: $O_{\mathrm{len}}$ and
$O_{\mathrm{par}}$ are incomparable, both below $O_{\mathrm{prof}}$.
The subsequence branch (right) is infinite. Neither branch is
comparable with the other at any level.
\end{center}
\bigskip

\begin{remark}
The length branch captures the total number of symbols, the parity branch
captures only the parities of individual symbol counts. The full profile
branch captures all absolute symbol counts without order. An observer
preserving partial ordering information resides in the subsequence branch.
\end{remark}

\begin{remark}[Dimensional asymmetry]
The subsequence branch is strictly infinite:
Theorem~\ref{thm:simon_sep} shows that $\cL(O_k) \subsetneq
\cL(O_{k+1})$ for every $k \geq 0$. The profile sub-lattice, by contrast, has finitely many elements between $O_\bot$ and $O_\top$ among
the canonical observers of Definition~\ref{def:osservatori_canonici}.
This asymmetry reflects the different natures of the two types of information loss: counting admits finitely many canonical coarsenings
among the observers considered here, while ordering admits a strict
infinite hierarchy. Whether further canonical observers exist strictly
between $O_\bot$ and $O_{\mathrm{prof}}$, incomparable with both
$O_{\mathrm{len}}$ and $O_{\mathrm{par}}$, is an open question
(Section~\ref{sec:complessita}).
\end{remark}

\section{Connection with Simon's hierarchy}
\label{sec:simon}

\subsection{Piecewise testable languages}

\begin{definition}[Subsequence]
A string $u$ is a \emph{subsequence} of $x$ (written $u \sqsubseteq x$)
if $u$ can be obtained by deleting symbols from $x$.
\end{definition}

\begin{definition}[Simon's preorder]
\[
  x \leq_k y \;\iff\; \forall\, u \in \Sigma^{\leq k},\;
  u \sqsubseteq x \Rightarrow u \sqsubseteq y.
\]
The corresponding equivalence is
$x \sim_k y \iff x \leq_k y \land y \leq_k x$.
\end{definition}

\begin{remark}
The relation $\sim_k$ coincides exactly with $\sim_{O_k}$: the family
$\{O_k\}$ is the computational realisation of Simon's preorder.
\end{remark}

\begin{definition}[$k$-piecewise testable language]
$L$ is \emph{$k$-piecewise testable} if $L \in \cL(O_k)$.
It is \emph{piecewise testable} if $L \in \bigcup_{k \geq 0} \cL(O_k)$.
\end{definition}

\subsection{Simon's theorem}

\begin{theorem}[Simon, 1975~\cite{simon1975}]
\label{thm:simon}
A regular language $L$ is piecewise testable if and only if its
syntactic monoid is $\cJ$-trivial.
\end{theorem}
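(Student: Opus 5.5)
The plan is to prove the two directions of Simon's theorem separately. Throughout, $L$ is regular with syntactic monoid $M$ and syntactic morphism $\eta : \Sigma^* \to M$. By the remark identifying $\sim_k$ with $\sim_{O_k}$, $L$ is piecewise testable iff $L$ is a union of $\sim_k$-classes for some $k$.

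\emph{($\Rightarrow$).} Suppose $L \in \cL(O_k)$. First I check that $\sim_k$ is a congruence of finite index. Finite index is immediate because $\Sigma^{\leq k}$ is finite. Compatibility with concatenation holds because every subsequence of $xz$ of length $\leq k$ splits as $u_1u_2$ with $u_1 \sqsubseteq x$ and $u_2 \sqsubseteq z$. Since $L$ is saturated by this congruence, the syntactic congruence is coarser, so $M$ is a quotient of $\Sigma^*/\!\sim_k$. $\cJ$-triviality passes to quotients of finite monoids; for aperiodic-type identities this is cleanest via the equational characterisation $(xy)^\omega x = (xy)^\omega = y(xy)^\omega$, with $x^{\omega+1}=x^\omega$. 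So it suffices to verify these identities in $\Sigma^*/\!\sim_k$. Concretely, I show $(uv)^{k}u \sim_k (uv)^{k} \sim_k v(uv)^{k}$ and $u^{k+1} \sim_k u^k$ for words $u,v$. A subsequence of length $\leq k$ of $(uv)^k u$ uses at most $k$ blocks, so it can be re-embedded using letters from the first $k$ copies of $uv$ (each letter of $u$ or $v$ is available in every copy). This is a routine counting argument.

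\emph{($\Leftarrow$).} This is the hard direction. Assume $M$ is $\cJ$-trivial with $n = |M|$. I aim to show that $x \sim_k y$ implies $\eta(x) = \eta(y)$ for some $k$ depending only on $n$ (say $k = 2n$). The approach follows Simon's original argument, or Straubing--Thérien's streamlined version via the lemma: if $x \sim_k y$ with $k$ large, there is a word $z$ with $x \leq z$, $y \leq z$ in the subword order and $x \sim_k z \sim_k y$. This reduces the problem to comparable pairs $x \sqsubseteq z$, $x \sim_k z$, where $z$ is obtained from $x$ by inserting letters. Next I prove the key monoid fact: in a $\cJ$-trivial monoid, if $x \sqsubseteq z$ and every inserted letter $a$ already lies in the content of a sufficiently long factorisation, then $\eta(z) = \eta(x)$. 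The tool is the identity $s\,a\,t = s\,t$ whenever $\eta(s)$ stabilises under right multiplication by elements whose content includes $a$. This is a standard consequence of $\cJ$-triviality: $m = m\,b\,c \Rightarrow m = mb$. The structure uses the factorisation $x = x_1 a_1 x_2 a_2 \cdots$ of $x$ into "$\cR$-descending" segments, of which there are at most $n$.

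The main obstacle is this last step: controlling where inserted letters fall and showing that $\sim_k$ equivalence forces each insertion to occur after a prefix whose image already absorbs that letter. My first attempt would be induction on $|z|-|x|$ combined with the left/right $\cR$- and $\cL$-chain decomposition, with chain length bounded by $n$. If that becomes messy, I would fall back on citing Simon~\cite{simon1975}, since the statement is attributed there and the paper may simply invoke it.
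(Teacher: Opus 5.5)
The paper gives no proof of this statement. It imports it from Simon~\cite{simon1975} and uses it as a black box, so your fallback of citing Simon is exactly what the paper does. Judged as a proof, your ($\Rightarrow$) direction is complete and correct:
\begin{itemize}
\item $\sim_k$ is a finite-index congruence saturating $L$, so $M$ is a quotient of $\Sigma^*/\!\sim_k$.
\item In that quotient $[u]^\omega=[u^k]$.
\item The identities $(uv)^ku\sim_k(uv)^k\sim_k v(uv)^k$ hold because every word of length at most $k$ over $\mathrm{alph}(uv)$ embeds in $(uv)^k$ one letter per copy.
\item These identities define $\cJ$-triviality and survive quotients.
\end{itemize}
Incidentally, your finite-index observation is the correct justification for ``every piecewise testable language is regular'' in Proposition~\ref{prop:prof_vs_simon}. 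Theorem~\ref{thm:simon} presupposes regularity and cannot supply it.

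Your ($\Leftarrow$) direction has a genuine gap, exactly where you say: you never link $\sim_k$ to absorption, and that link is the whole content of the hard direction. Two ingredients are missing.

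\emph{The common-superword lemma.} You assume it without proof. It is a nontrivial combinatorial lemma from Simon's own work; the Straubing--Th\'erien argument does not use it, and instead realises $M$ as a quotient of an ordered monoid satisfying $x\le 1$. It must therefore be proved or cited precisely. Granting it, your reduction to single insertions is sound, since $x\sqsubseteq z'\sqsubseteq z$ and $x\sim_k z$ force $x\sim_k z'$.

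\emph{The absorption step.} This is the decisive gap. Your target, ``a prefix whose image already absorbs the inserted letter'', is not the right invariant. A stabilising prefix somewhere inside $u$ is useless unless the letters between it and the insertion point are absorbed too. An insertion near the front can only be absorbed from the right. What you need is the following pair of characterisations.
\begin{itemize}
\item $uav\sim_k uv$ iff there are $p+q\ge k$ with $ua\sim_p u$ and $av\sim_q v$. For the nontrivial direction: if the maximal such $p,q$ had $p+q<k$, take witnesses $w_1a\not\sqsubseteq u$ and $aw_2\not\sqsubseteq v$ with $|w_1|\le p$, $|w_2|\le q$. Then $w_1aw_2\sqsubseteq uav$ but $w_1aw_2\not\sqsubseteq uv$.
\item $ua\sim_p u$ iff $u=u_0u_1\cdots u_p$ with $a\in\mathrm{alph}(u_p)\subseteq\mathrm{alph}(u_{p-1})\subseteq\cdots\subseteq\mathrm{alph}(u_1)$, and dually for $v$. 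Prove the version for a set $B$ of letters by induction on $p$: take $u_p$ to be the shortest suffix containing $B$, and use that $\sim_{p-1}$ is a congruence.
\end{itemize}
The monoid argument then goes as follows.
\begin{itemize}
\item Taking $k\ge 2n-1$ (your $k=2n$ works) forces $p\ge n$ or $q\ge n$.
\item If $p\ge n$, the $p+1$ elements $\eta(u_0\cdots u_j)$ form a descending $\le_{\mathcal R}$-chain in an $\mathcal{R}$-trivial monoid of size $n$. Hence two consecutive ones coincide: $m=m\,\eta(u_j)$ with $m=\eta(u_0\cdots u_{j-1})$.
\item Your fact $m=mbc\Rightarrow m=mb$, applied twice, gives $m\,\eta(b)=m$ for every $b\in\mathrm{alph}(u_j)$.
\item The nesting places $a$ and all letters of $u_{j+1}\cdots u_p$ in $\mathrm{alph}(u_j)$, so $\eta(ua)=m=\eta(u)$.
\item The case $q\ge n$ is the mirror image, using $\mathcal{L}$-triviality.
\end{itemize}
This nested-alphabet factorisation is what your ``$\mathcal{R}$-descending segments'' lack. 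Without it, the proposed induction on $|z|-|x|$ has nothing to work with.
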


\begin{corollary}[Observational reformulation]
\label{cor:simon_obs}
A regular language $L$ is in $\bigcup_{k} \cL(O_k)$ if and only if its
syntactic monoid is $\cJ$-trivial.
\end{corollary}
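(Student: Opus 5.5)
The corollary is a direct translation of Theorem~\ref{thm:simon} into the vocabulary of observers, so the plan is to show that the two notions of ``piecewise testable'' coincide and then invoke Simon's theorem verbatim. In Simon's original formulation, a language $L$ is $k$-piecewise testable if it is a finite Boolean combination of languages of the form $\Sigma^* a_1 \Sigma^* a_2 \cdots \Sigma^* a_m \Sigma^*$ with $m \le k$, i.e.\ of the shuffle ideals $\{x \mid u \sqsubseteq x\}$ for $|u| \le k$. So the first step is to prove that, for each fixed $k$, this class equals $\cL(O_k)$. Once that is done, $\bigcup_k \cL(O_k)$ equals the class of piecewise testable languages in Simon's sense, and Theorem~\ref{thm:simon} gives the claimed equivalence with $\cJ$-triviality of the syntactic monoid for regular $L$.

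For the inclusion of Boolean combinations in $\cL(O_k)$: each ideal $\{x \mid u \sqsubseteq x\}$ with $|u|\le k$ is $O_k$-saturated, since membership is read off directly from $O_k(x)$ by checking whether $u \in O_k(x)$. Moreover, $\cL(O)$ is closed under complement, finite union and intersection, because unions of $\sim_O$-classes are closed under these operations.

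For the converse, the key observation is that $\Sigma^{\le k}$ is finite. Hence $O_k$ takes only finitely many values, namely subsets of $\Sigma^{\le k}$, and $\sim_{O_k}$ has finitely many classes. Each class $\{x \mid O_k(x)=S\}$ can be written as
\[
  \bigcap_{u\in S}\{x\mid u\sqsubseteq x\}\;\cap\;\bigcap_{u\in\Sigma^{\le k}\setminus S}\bigl(\Sigma^*\setminus\{x\mid u\sqsubseteq x\}\bigr),
\]
which is a finite Boolean combination of ideals. An $O_k$-saturated $L$ is a union of at most finitely many such classes, so it is again a finite Boolean combination. This step also shows that every language in $\cL(O_k)$ is regular, so the restriction to regular $L$ in the statement is harmless.

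The main obstacle is only definitional bookkeeping. The paper \emph{defines} $k$-piecewise testability as membership in $\cL(O_k)$, so one must check that this agrees with Simon's definition rather than simply assume it. The finiteness argument above handles this. One should also note the harmless index shift at $k=0$: $O_0$ is constant, giving $\cL(O_0)=\cL(O_\bot)$. With the equivalence established, the corollary follows immediately from Theorem~\ref{thm:simon}.
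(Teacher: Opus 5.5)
Your proposal is correct and takes the same route as the paper: the paper defines piecewise testability as membership in $\bigcup_{k} \cL(O_k)$ and reads Theorem~\ref{thm:simon} through that definition. Your finiteness argument identifies $\cL(O_k)$ with Simon's finite Boolean combinations of shuffle ideals, which makes explicit a compatibility of definitions that the paper only notes in passing, in its remark that $\sim_k$ coincides with $\sim_{O_k}$.
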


\begin{remark}
Corollary~\ref{cor:simon_obs} makes explicit that membership in the piecewise testable class is equivalent to $O_k$-decidability for some
finite $k$, a fact that the algebraic characterisation leaves implicit.
For the algebraic theory of syntactic monoids and varieties, see
also~\cite{pin1986}.
\end{remark}

\begin{theorem}[Separation of Simon's hierarchy~\cite{simon1975}]
\label{thm:simon_sep}
$\cL(O_k) \subsetneq \cL(O_{k+1})$ for every $k \geq 0$.
\end{theorem}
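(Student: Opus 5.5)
The plan has two parts: the inclusion $\cL(O_k) \subseteq \cL(O_{k+1})$, and a separating witness for strictness. Both use only the definitions of $O_k$ and $O$-saturation. For $k = 0$ we read $O_0(x) = \{\varepsilon\}$ (the only subsequence of length $\leq 0$). Thus $O_0 \simeq O_\bot$, and $\cL(O_0) = \{\emptyset, \Sigma^*\}$.

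\emph{Inclusion.} By Proposition~\ref{prop:monotonia} it suffices to show $O_k \preceq O_{k+1}$. We exhibit $f$ with $O_k = f \circ O_{k+1}$: take $f(T) = T \cap \Sigma^{\leq k}$. This works because the subsequences of $x$ of length $\leq k$ are exactly those elements of $O_{k+1}(x)$ of length $\leq k$. For $k = 0$ the constant map suffices.

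\emph{Strictness.} For $k \geq 1$ we reuse the witness from Proposition~\ref{prop:ramo_ok}, $L_k = \{x \mid 0^{k+1} \sqsubseteq x\}$.
\begin{enumerate}[label=(\alph*)]
  \item $L_k$ is $O_{k+1}$-saturated, since $x \in L_k \iff 0^{k+1} \in O_{k+1}(x)$. Membership is therefore a function of $O_{k+1}(x)$.
  \item $L_k$ is not $O_k$-saturated. We check $O_k(0^k) = O_k(0^{k+1}) = \{0^j \mid 0 \leq j \leq k\}$, while $0^k \notin L_k$ and $0^{k+1} \in L_k$.
\end{enumerate}
For $k = 0$ the same witness $L_0 = \{x \mid 0 \sqsubseteq x\}$ works. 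It lies in $\cL(O_1)$ but is neither $\emptyset$ nor $\Sigma^*$, so it is not in $\cL(O_0)$.

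\emph{Main obstacle.} The argument is routine; the only delicate point is step (b), the equality of subsequence sets. I would verify it carefully, including that $\varepsilon$ belongs to both sets. I would also confirm that the convention for $k = 0$ is consistent with the paper's use of $O_k$ for $k \geq 1$. Alternatively, one could cite Simon's original separation, but the direct witness keeps the proof self-contained.
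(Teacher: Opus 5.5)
Your proof is correct and uses the same witness $L_k = \{x \mid 0^{k+1} \sqsubseteq x\}$ and the same key equality $O_k(0^k) = O_k(0^{k+1})$ as the paper. You also make explicit two points the paper's proof leaves implicit: the inclusion $\cL(O_k) \subseteq \cL(O_{k+1})$, obtained from $f(T) = T \cap \Sigma^{\leq k}$ and Proposition~\ref{prop:monotonia}, and the reading of $O_0$ in the case $k = 0$.
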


\begin{proof}
The language $L_k = \{x \mid 0^{k+1} \sqsubseteq x\}$ is in
$\cL(O_{k+1})$ but not in $\cL(O_k)$, since $O_k(0^k) = O_k(0^{k+1})$
(same subsequences of length $\leq k$) while $0^k \notin L_k$ and
$0^{k+1} \in L_k$: $L_k$ is not $O_k$-saturated
(Definition~\ref{def:saturation}).
\end{proof}

\subsection{Position of $O_{\mathrm{prof}}$ with respect to Simon's
hierarchy}

\begin{proposition}
\label{prop:prof_vs_simon}
$\cL(O_{\mathrm{prof}})$ and $\bigcup_k \cL(O_k)$ are incomparable.
\end{proposition}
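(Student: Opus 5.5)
To prove Proposition~\ref{prop:prof_vs_simon} I will exhibit two witnesses, one for each failed inclusion, working over $\Sigma=\{0,1\}$ as in Definition~\ref{def:osservatori_canonici}. Membership in $\bigcup_k \cL(O_k)$ means that a single finite $k$ makes $L$ $O_k$-saturated. So to show non-membership I must defeat \emph{every} $k$: for each $k$ I will produce a pair $x\sim_{O_k} y$ that $L$ separates.

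\emph{Step 1: $\bigcup_k \cL(O_k)\not\subseteq \cL(O_{\mathrm{prof}})$.} I take $L=\{x \mid 01 \sqsubseteq x\}$. This language lies in $\cL(O_2)$, since membership is read off directly from whether $01\in O_2(x)$. It is not in $\cL(O_{\mathrm{prof}})$ by Example~\ref{ex:sat-no}: the strings $01$ and $10$ have the same profile, but only $01\in L$.

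\emph{Step 2: $\cL(O_{\mathrm{prof}})\not\subseteq \bigcup_k \cL(O_k)$.} I take $L=\{x \mid |x|_0 \equiv 0 \pmod 2\}$, which is $O_{\mathrm{prof}}$-saturated because it depends only on $|x|_0$. Now fix any $k\ge 0$. As already used in Proposition~\ref{prop:incomparabilita}, the subsequences of $0^n$ of length at most $k$ are exactly $\{0^j \mid j\le \min(n,k)\}$. Hence $O_k(0^{k+1})=O_k(0^{k+2})=\{0^j\mid j\le k\}$. Exactly one of $k+1$ and $k+2$ is even, so $L$ separates these two strings. Therefore $L\notin\cL(O_k)$ for each $k$, and so $L\notin\bigcup_k\cL(O_k)$. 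For $k=0$ the equality of $O_0$ values is trivial, since $O_0(x)=\{\varepsilon\}$ for every $x$.

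\emph{Expected obstacle.} The only subtle point is that the quantifier over $k$ must sit outside. The witness in Step 2 has to be a single fixed language that escapes every level, rather than a different language for each $k$. The parity witness handles this uniformly, because the colliding pair $0^{k+1},0^{k+2}$ moves up with $k$ while the language stays fixed.

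Alternatively, $\{x\mid |x|_0=|x|_1\}$ from Example~\ref{ex:sat-yes} would also work for Step 2, using the pair $0^{k}1^{k}$ and $0^{k+1}1^{k}$. I prefer the parity language because the collision is already established in the paper.
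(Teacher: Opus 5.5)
Your proof is correct. Step~1 is the paper's argument verbatim: the same witness $\{x \mid 01 \sqsubseteq x\}$, placed in $\cL(O_2)$ and excluded from $\cL(O_{\mathrm{prof}})$ via Example~\ref{ex:sat-no}.

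For Step~2 you take a genuinely different route. The paper argues indirectly. It chooses the non-regular language $\{x \mid |x|_0 = |x|_1\}$ and concludes that it lies outside $\bigcup_k \cL(O_k)$ because every piecewise testable language is regular. The paper attributes that fact to Theorem~\ref{thm:simon}, but as stated that theorem presupposes regularity. The needed fact really comes from $O_k$ taking only finitely many values, each fibre being a Boolean combination of the regular languages $\{x \mid u \sqsubseteq x\}$.

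You instead defeat each level $k$ directly with the collision $O_k(0^{k+1}) = O_k(0^{k+2})$. This is self-contained, avoids any appeal to regularity or to Simon's theorem, and handles the outer quantifier over $k$ explicitly, including $k=0$. It also proves something strictly stronger. Your witness $\{x \mid |x|_0 \text{ even}\}$ is itself regular, so you show $\cL(O_{\mathrm{prof}}) \cap \mathbf{REG} \not\subseteq \bigcup_k \cL(O_k)$. The incomparability therefore persists inside the regular languages and is not an artefact of non-regularity, which the paper's witness cannot show.

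What the paper's route buys in exchange is a uniform criterion: any non-regular $O_{\mathrm{prof}}$-saturated language is automatically a witness, with no per-$k$ collision to exhibit. Your alternative witness with the pair $0^k1^k$, $0^{k+1}1^k$ is also correct, since both strings have exactly the subsequences $0^a1^b$ with $a+b\le k$ at length at most $k$.
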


\begin{proof}
\emph{$\cL(O_{\mathrm{prof}}) \not\subseteq \bigcup_k \cL(O_k)$.}
Let $L = \{x \in \{0,1\}^* \mid |x|_0 = |x|_1\}$.
Since $L$ depends only on the profile, $L \in \cL(O_{\mathrm{prof}})$
(Example~\ref{ex:sat-yes}). However, $L$ is not regular: by the pumping
lemma~\cite{sipser2012}, the string $0^p 1^p \in L$ cannot be pumped
within $L$ for any $p \geq 1$. Since every piecewise testable language
is regular by Theorem~\ref{thm:simon}, $L \notin \bigcup_k \cL(O_k)$.

\emph{$\bigcup_k \cL(O_k) \not\subseteq \cL(O_{\mathrm{prof}})$.}
Let $L_1 = \{x \mid 01 \sqsubseteq x\}$. Membership depends only on
whether $01$ is present as a subsequence, so $L_1 \in \cL(O_2) \subseteq
\bigcup_k \cL(O_k)$. But $L_1$ is not $O_{\mathrm{prof}}$-saturated
(Example~\ref{ex:sat-no}), so $L_1 \notin \cL(O_{\mathrm{prof}})$.
\end{proof}

Proposition~\ref{prop:prof_vs_simon} completes the picture: the profile sub-lattice and the subsequence branch represent two orthogonal dimensions
of information loss, both accessible from $O_\top$ and both invisible
from $O_\bot$.

\section{Orthogonality with the Chomsky hierarchy}
\label{sec:ortogonalita}

Let $\mathbf{REG}$, $\mathbf{CFL}$, $\mathbf{CSL}$, $\mathbf{RE}$
denote the standard Chomsky hierarchy~\cite{chomsky1956,sipser2012}, and
let $\cL_\pi$ be the class of permutation-closed languages.

\begin{remark}[Parikh's theorem]
Parikh's theorem~\cite{parikh1966} states that for every $L \in \mathbf{CFL}$ there exists $L' \in \mathbf{REG}$ with the same
Parikh image: $\{\prof(x) \mid x \in L\} = \{\prof(x) \mid x \in L'\}$.
In observational terms, $L$ and $L'$ are
$O_{\mathrm{prof}}$-profile-equivalent: no observer $O \preceq
O_{\mathrm{prof}}$ can determine from $\prof(x)$ alone whether a given
profile vector belongs to the image of $L$ but not of $L'$. This shows that $\cL(O_{\mathrm{prof}})$ cuts across the Chomsky
hierarchy in a non-trivial way.
\end{remark}

\begin{theorem}[Incomparability]
\label{thm:ortogonalita}
$\cL_\pi$ is strictly incomparable with $\mathbf{REG}$.
\end{theorem}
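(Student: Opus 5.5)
To prove that $\cL_\pi$ and $\mathbf{REG}$ are strictly incomparable, I will show that neither class contains the other by exhibiting one explicit witness for each direction. I will also note that the intersection is non-empty, so the two classes genuinely overlap rather than being disjoint. All witnesses are taken over $\Sigma = \{0,1\}$. By the remark following Definition~\ref{def:osservatori_canonici}, they transfer to larger alphabets by ignoring the extra symbols.

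\emph{Step 1: $\cL_\pi \not\subseteq \mathbf{REG}$.} I take $L = \{x \in \{0,1\}^* \mid |x|_0 = |x|_1\}$.
\begin{itemize}
  \item It is permutation-closed, since membership depends only on $\prof(x)$. Equivalently, by Theorem~\ref{thm:caratterizzazione} it is recognised by an order-blind automaton; Example~\ref{ex:sat-yes} already records this.
  \item It is not regular. I will use the pumping argument already invoked in Proposition~\ref{prop:prof_vs_simon}. Given a pumping length $p$, write $0^p1^p = uvw$ with $|uv| \le p$ and $|v| \ge 1$. Then $v = 0^j$ with $j \ge 1$, so $uv^2w = 0^{p+j}1^p \notin L$.
\end{itemize}
As an alternative, I could intersect with the regular language $0^*1^*$ to obtain $\{0^n1^n\}$, which is the standard non-regular example.

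\emph{Step 2: $\mathbf{REG} \not\subseteq \cL_\pi$.} I take $L_4 = 0^*1^*$, or equivalently $L_3 = \{x \mid x \text{ starts with } 1\}$ from the Example in Section~\ref{sec:modello}.
\begin{itemize}
  \item It is regular, since it is given by a regular expression (a two-state DFA suffices).
  \item It is not permutation-closed, because $\prof(01) = \prof(10) = (1,1)$ while $01 \in L_4$ and $10 \notin L_4$. This is exactly the witness used for $O_{\mathrm{prof}} \prec O_\top$ in Proposition~\ref{prop:ramo_prof}.
\end{itemize}

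\emph{Step 3: the intersection is non-trivial.} I take $L_1 = \{x \mid |x|_1 \equiv 0 \pmod 2\}$.
\begin{itemize}
  \item It is regular, recognised by a two-state parity DFA.
  \item It is permutation-closed, since membership depends only on $|x|_1$.
\end{itemize}
This shows $\cL_\pi \cap \mathbf{REG}$ contains a language other than $\emptyset$ and $\Sigma^*$, so the incomparability is not the trivial disjoint kind.

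No step is a real obstacle, since every witness already appears earlier in the paper. The only point needing care is the non-regularity in Step 1, and I will handle it by the pumping lemma as above rather than by appealing to Simon's theorem.
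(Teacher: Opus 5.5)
Your proposal is correct and follows the paper's proof essentially verbatim. It uses the same witness $\{x \mid |x|_0 = |x|_1\}$, shown non-regular by pumping $0^p1^p$, and the same witness $0^*1^*$, excluded from $\cL_\pi$ via $\prof(01)=\prof(10)$.

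Two small points do not affect the argument. Step~3 is unnecessary for incomparability, and $\emptyset$ and $\Sigma^*$ already lie in both classes anyway. Also, the parenthetical ``two-state DFA'' for $0^*1^*$ holds only for a partial DFA, since a complete one needs a dead state.
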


\begin{proof}
\emph{$\cL_\pi \not\subseteq \mathbf{REG}$}: the language
$L = \{x \in \{0,1\}^* \mid |x|_0 = |x|_1\}$ is permutation-closed
(Example~\ref{ex:sat-yes}) but not regular (pumping
lemma~\cite{sipser2012}).

\emph{$\mathbf{REG} \not\subseteq \cL_\pi$}: $0^*1^*$ is regular but not permutation-closed, since $\prof(01) = \prof(10) = (1,1)$ while
$01 \in 0^*1^*$ and $10 \notin 0^*1^*$.
\end{proof}

\begin{corollary}
The observational hierarchy and the Chomsky hierarchy are independent
axes of language classification:
\begin{center}
\begin{tabular}{ll}
  \toprule
  regular and $O_{\mathrm{prof}}$-saturated: &
    $\{x \mid |x|_1 \text{ even}\}$ \\
  regular and not $O_{\mathrm{prof}}$-saturated: &
    $0^*1^*$ \\
  non-regular and $O_{\mathrm{prof}}$-saturated: &
    $\{x \mid |x|_0 = |x|_1\}$ \\
  non-regular and not $O_{\mathrm{prof}}$-saturated: &
    $\{0^n 1^n \mid n \geq 0\}$ \\
  \bottomrule
\end{tabular}
\end{center}
\end{corollary}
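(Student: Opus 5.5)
The corollary asks for four witnesses, one in each cell of the table. For each language I will check two things: whether it is regular, and whether it is $O_{\mathrm{prof}}$-saturated in the sense of Definition~\ref{def:saturation}. By Theorem~\ref{thm:caratterizzazione}, being $O_{\mathrm{prof}}$-saturated is the same as being permutation-closed, i.e.\ lying in $\cL_\pi$. Once all four cells are filled, neither axis determines the other, and this is what ``independent axes'' means formally. The two off-diagonal cells are exactly the two halves of Theorem~\ref{thm:ortogonalita}, so much of the work is already done.

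\emph{Row 1.} For $\{x \mid |x|_1 \text{ even}\}$, regularity is witnessed by a two-state DFA that tracks the parity of the number of $1$s read so far. Saturation holds because membership depends only on the second coordinate of $\prof(x)$; equivalently, the language lies in $\cL(O_{\mathrm{par}}) \subseteq \cL(O_{\mathrm{prof}})$ by Propositions~\ref{prop:monotonia} and~\ref{prop:ramo_prof}.

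\emph{Rows 2 and 3.} For $0^*1^*$, it is regular because it is given by a regular expression. It is not saturated by the pair $01, 10$, exactly as in the proof of Theorem~\ref{thm:ortogonalita}. For $\{x \mid |x|_0 = |x|_1\}$, saturation is Example~\ref{ex:sat-yes}, and non-regularity follows from the pumping lemma applied to $0^p1^p$, again as in Theorem~\ref{thm:ortogonalita}.

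\emph{Row 4.} For $\{0^n1^n \mid n \ge 0\}$, non-regularity is the standard pumping-lemma argument: pumping inside the leading block of $0$s of $0^p1^p$ unbalances the two counts. For non-saturation, take $01$ and $10$. They have the same profile $(1,1)$, but $01$ is in the language and $10$ is not. Therefore the language is not $O_{\mathrm{prof}}$-saturated.

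I expect no real obstacle here. The only point needing care is the interpretation: ``independent axes'' should be read as the claim that all four combinations of (regular or not) and ($O_{\mathrm{prof}}$-saturated or not) occur. That is exactly what the four rows of the table verify.
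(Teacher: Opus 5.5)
Your proposal is correct and follows the paper's intended route. The paper gives no separate proof: the two off-diagonal cells come from Theorem~\ref{thm:ortogonalita} (with Example~\ref{ex:sat-yes}), and the remaining cells, $\{x \mid |x|_1 \text{ even}\}$ and $\{0^n1^n \mid n \geq 0\}$, are left to exactly the direct checks you give. As a minor aside, identifying $O_{\mathrm{prof}}$-saturation with permutation-closure needs no appeal to Theorem~\ref{thm:caratterizzazione}, since it is immediate from the definitions: $x \sim_{O_{\mathrm{prof}}} y$ holds precisely when $\prof(x) = \prof(y)$.
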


\begin{proposition}[Universal incomparability]
\label{prop:universal-incomp}
For every observer $O$ with $O_{\mathrm{prof}} \preceq O \preceq O_\top$
and $O \not\simeq O_\top$, the class $\cL(O)$ is strictly incomparable
with $\mathbf{REG}$.
\end{proposition}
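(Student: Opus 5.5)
The plan is to exhibit, for an arbitrary observer $O$ with $O_{\mathrm{prof}} \preceq O \preceq O_\top$ and $O \not\simeq O_\top$, two witnesses. The first is a non-regular language in $\cL(O)$, giving $\cL(O) \not\subseteq \mathbf{REG}$. The second is a regular language outside $\cL(O)$, giving $\mathbf{REG} \not\subseteq \cL(O)$. The lower bound $O_{\mathrm{prof}} \preceq O$ will supply the first witness and the upper gap $O \not\simeq O_\top$ the second. The hypothesis $O \preceq O_\top$ is automatic and is not needed.

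\emph{First direction.} Take $L = \{x \in \{0,1\}^* \mid |x|_0 = |x|_1\}$. By Example~\ref{ex:sat-yes}, $L \in \cL(O_{\mathrm{prof}})$. By Proposition~\ref{prop:monotonia} applied to $O_{\mathrm{prof}} \preceq O$, we get $L \in \cL(O)$. As already used in Theorem~\ref{thm:ortogonalita}, $L$ is not regular by the pumping lemma. Hence $\cL(O) \not\subseteq \mathbf{REG}$.

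\emph{Second direction.} The key step is to turn $O \not\simeq O_\top$ into non-injectivity of $O$. Since $O \preceq O_\top$ always holds, $O \not\simeq O_\top$ means $O_\top \not\preceq O$. I will show the contrapositive: if $O : \Sigma^* \to S$ is injective, then $O_\top \preceq O$. To do so, define $f : S \to \Sigma^*$ by $f(O(x)) = x$ on the image of $O$, which is well defined by injectivity. Outside the image, let $f$ be an arbitrary fixed value, say $\varepsilon$. Then $O_\top = f \circ O$.

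Hence $O$ is not injective, and there exist strings $x \neq y$ with $O(x) = O(y)$. Consider the singleton $\{x\}$. It is finite, hence regular. It is not $O$-saturated, because $x \sim_O y$ while $x \in \{x\}$ and $y \notin \{x\}$. By Definition~\ref{def:saturation}, $\{x\} \notin \cL(O)$, so $\mathbf{REG} \not\subseteq \cL(O)$.

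The only delicate point is the non-injectivity step. There one must define $f$ on all of $S$, not merely on the image of $O$, and choosing an arbitrary value off the image handles this. Everything else is a direct appeal to Proposition~\ref{prop:monotonia} and the earlier examples. Combining the two directions gives strict incomparability of $\cL(O)$ with $\mathbf{REG}$.
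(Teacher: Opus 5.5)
Your proof is correct and follows the paper's argument: the singleton $\{x\}$ for a collision $O(x)=O(y)$ witnesses $\mathbf{REG} \not\subseteq \cL(O)$, and $\{x \mid |x|_0 = |x|_1\}$, lifted from $\cL(O_{\mathrm{prof}})$ to $\cL(O)$ by Proposition~\ref{prop:monotonia}, witnesses $\cL(O) \not\subseteq \mathbf{REG}$. Your explicit left-inverse argument that $O \not\simeq O_\top$ forces non-injectivity fills in a step the paper asserts without proof.
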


\begin{proof}
Since $O \not\simeq O_\top$, there exist $x \neq y$ with $O(x) = O(y)$.
The language $L_0 = \{x\}$ is in $\mathbf{REG}$ but not in $\cL(O)$:
$x \sim_O y$ while $x \in L_0$ and $y \notin L_0$, so $L_0$ is not
$O$-saturated (Definition~\ref{def:saturation}). Hence
$\mathbf{REG} \not\subseteq \cL(O)$.

For the other direction, since $O_{\mathrm{prof}} \preceq O$,
Proposition~\ref{prop:monotonia} gives $\cL(O_{\mathrm{prof}})
\subseteq \cL(O)$. By Proposition~\ref{prop:prof_vs_simon} and
Theorem~\ref{thm:ortogonalita}, the language $\{x \mid |x|_0 = |x|_1\}$
is in $\cL(O_{\mathrm{prof}}) \subseteq \cL(O)$ but not regular.
Hence $\cL(O) \not\subseteq \mathbf{REG}$.
\end{proof}

\begin{remark}
Proposition~\ref{prop:universal-incomp} covers observers in the range $O_{\mathrm{prof}} \preceq O \not\simeq O_\top$. For observers
incomparable with $O_{\mathrm{prof}}$ (such as the $O_k$), the same
conclusion holds by symmetric arguments using the witnesses in
Theorem~\ref{thm:ortogonalita} and Proposition~\ref{prop:prof_vs_simon}.
The observational hierarchy is therefore systematically transversal to
the Chomsky hierarchy at all levels.
\end{remark}

\section{Observational complexity}
\label{sec:complessita}

\begin{definition}[Observational complexity]
Let $\mathcal{F}$ be a fixed family of observers. The
\emph{observational complexity} of a language $L$ with respect to
$\mathcal{F}$ is
\[
  \mathrm{OC}_{\mathcal{F}}(L) \coloneqq
  \min_{\preceq} \{O \in \mathcal{F} \mid L \in \cL(O)\},
\]
when the minimum (in the partial order $\preceq$) exists. When no
minimum exists, the observational complexity is the antichain of minimal
observers in $\{O \in \mathcal{F} \mid L \in \cL(O)\}$.
\end{definition}

\begin{proposition}[Examples]
In the family $\mathcal{F} = \{O_\bot, O_{\mathrm{len}},
O_{\mathrm{par}}, O_{\mathrm{prof}}, O_1, O_2, \dots, O_\top\}$:
\begin{enumerate}[label=(\roman*)]
  \item $\mathrm{OC}(\Sigma^*) = \mathrm{OC}(\emptyset) = O_\bot$.
  \item $\mathrm{OC}(\{x \mid |x| \text{ even}\}) = O_{\mathrm{len}}$.
  \item $\mathrm{OC}(\{x \mid |x|_1 \equiv 0 \pmod 2\})
        = O_{\mathrm{par}}$.
  \item $\mathrm{OC}(\{x \mid |x|_0 = |x|_1\}) = O_{\mathrm{prof}}$.
  \item $\mathrm{OC}(0^*1^*) = O_\top$.
\end{enumerate}
\end{proposition}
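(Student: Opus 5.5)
The plan is to treat each item by the same two-part scheme. First exhibit an observer $O \in \mathcal{F}$ with $L \in \cL(O)$, usually by writing $L = \{x \mid g(O(x)) = 1\}$ for an explicit $g$. Then show that no observer of $\mathcal{F}$ that is strictly below or incomparable with $O$ saturates $L$; this gives that $O$ is the $\preceq$-minimum of $\{O' \in \mathcal{F} \mid L \in \cL(O')\}$. Non-saturation will always be shown by a pair $x \sim_{O'} y$ with $x \in L$, $y \notin L$, as in Definition~\ref{def:saturation}. Monotonicity (Proposition~\ref{prop:monotonia}) then covers every observer above the minimiser, and the order relations of Proposition~\ref{prop:relazioni_complete} tell us exactly which observers must be ruled out.

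Item (i) is immediate: $\Sigma^*$ and $\emptyset$ lie in $\cL(O_\bot)$, and $O_\bot$ is the bottom. Item (iii): $|x|_1 \bmod 2$ is the second coordinate of $O_{\mathrm{par}}$, so the language lies in $\cL(O_{\mathrm{par}})$. For minimality, $O_\bot$ is excluded by $\varepsilon$ versus $1$, and $O_{\mathrm{len}}$ by $01 \sim 11$ (already used in Proposition~\ref{prop:ramo_prof}). For each $O_k$ we use $O_k(1^{k+1}) = O_k(1^{k+2})$, since these two strings have opposite parity of ones. Item (iv): membership in $\cL(O_{\mathrm{prof}})$ is Example~\ref{ex:sat-yes}. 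The observer $O_{\mathrm{len}}$ is excluded by $01 \sim 00$, and $O_{\mathrm{par}}$ by $01 \sim 0011\,01$-type pairs such as $01 \sim 000111$? That pair fails, since both strings are in $L$. A working pair is $01$ versus $0001$, which have equal parities with only the first in $L$. Each $O_k$ is excluded because $O_k(0^k 1^k) = O_k(0^{k+1}1^k)$, as all subsequences of length $\le k$ agree, while only the first string is in $L$.

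For item (ii) we need $\{x \mid |x| \text{ even}\} \in \cL(O_{\mathrm{len}})$, which is clear. We must also exclude the observers that could compete as a minimum: $O_\bot$, via $\varepsilon$ versus $0$, and $O_k$, via $0^{k+1}$ versus $0^{k+2}$. The delicate point is $O_{\mathrm{par}}$. The parity of $|x|$ equals $(|x|_0 + |x|_1) \bmod 2$, which is a function of $O_{\mathrm{par}}(x)$. So the language is also $O_{\mathrm{par}}$-saturated, and $O_{\mathrm{len}}$ is incomparable with $O_{\mathrm{par}}$. For $O_{\mathrm{len}}$ to be the $\preceq$-minimum as worded, we would need to rule $O_{\mathrm{par}}$ out, and I do not see a witness pair that does so. This is the step I expect to be the main obstacle. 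The first thing I would try is to check whether the definition of $\mathrm{OC}$ intends to list one representative of the minimal antichain.

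Item (v) has the same difficulty. $0^*1^*$ is the set of strings that avoid $10$ as a subsequence, so membership is read off $O_2(x)$, which suggests $0^*1^* \in \cL(O_2)$. To prove $\mathrm{OC}(0^*1^*) = O_\top$ as stated, one would have to show $0^*1^* \notin \cL(O_k)$ for every $k$. The natural attempt is to search for pairs $x \sim_k y$ that $0^*1^*$ separates. However, any string outside $0^*1^*$ contains $10$, so it is distinguished already at level $k = 2$. I would therefore verify this item carefully against Definition~\ref{def:osservatori_canonici} before committing to it.
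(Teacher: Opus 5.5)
The paper states this proposition without proof, so there is no author's argument to compare against. Your scheme is the right one: exhibit one saturating observer, then rule out every member of $\mathcal{F}$ not above it by a pair $x \sim_{O'} y$ that $L$ separates. Your witnesses for (i), (iii) and (iv) are correct. In (iv) you omit $O_\bot$, but $\varepsilon \in L$, $0 \notin L$ disposes of it (or use monotonicity from the exclusion of $O_{\mathrm{len}}$).

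The two places where you hesitate are not gaps in your argument. There the statement itself is false, and you should say so rather than look for witnesses that cannot exist.

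\emph{Item (ii).} Your observation $|x| \bmod 2 = (|x|_0 + |x|_1) \bmod 2$ already proves $L \in \cL(O_{\mathrm{par}})$, so no pair can exclude $O_{\mathrm{par}}$. Combined with your exclusions of $O_\bot$ and every $O_k$, the saturating set is $\{O_{\mathrm{len}}, O_{\mathrm{par}}, O_{\mathrm{prof}}, O_\top\}$. This set has two incomparable minimal elements (Proposition~\ref{prop:ramo_prof}) and hence no minimum. The definition of $\mathrm{OC}$ explicitly returns the antichain in that case, so the ``one representative'' reading you hoped for is not available. The correct value is $\{O_{\mathrm{len}}, O_{\mathrm{par}}\}$.

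\emph{Item (v).} Your remark that $x \in 0^*1^*$ iff $10 \not\sqsubseteq x$ is a complete proof that $0^*1^* \in \cL(O_2)$. Since $O_2 \prec O_\top$ strictly, $O_\top$ cannot be the minimum. Now add two exclusions:
\begin{itemize}
  \item $O_1(01) = O_1(10) = \{\varepsilon,0,1\}$ excludes $O_1$;
  \item $O_{\mathrm{prof}}(01) = O_{\mathrm{prof}}(10)$ excludes $O_{\mathrm{prof}}$, and by monotonicity also $O_\bot$, $O_{\mathrm{len}}$ and $O_{\mathrm{par}}$.
\end{itemize}
The saturating set is therefore the chain $\{O_2, O_3, \dots, O_\top\}$, so $\mathrm{OC}(0^*1^*) = O_2$. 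This is consistent with Corollary~\ref{cor:simon_obs}, since $0^*1^*$ is piecewise testable. The value $O_\top$ would be right for $\{0^n1^n \mid n \geq 0\}$. There, $01, 10$ excludes the profile side, and $O_k(0^{k+1}1^{k+1}) = O_k(0^{k+2}1^{k+1})$ excludes each $O_k$.
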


\begin{theorem}[Computational separation]
\label{thm:separazione}
Let $O_1 \prec O_2$ strictly. Then there exists a language $L$ such that $L$ is not decidable by any machine equipped with $O_1$, regardless
of its computational power, but is decidable by a finite-state machine
equipped with $O_2$.
\end{theorem}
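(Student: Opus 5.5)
The plan is to extract from the strictness $O_1 \prec O_2$ a single pair of strings that $O_1$ confuses and $O_2$ separates. Then I take $L$ to be one $\sim_{O_2}$-equivalence class: it is trivially finite-state recognisable from the encoded $O_2$-output, and it cannot be $O_1$-saturated.

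\emph{Step 1 (witness pair).} Strictness means $O_1 \preceq O_2$ and $O_2 \not\preceq O_1$. I claim there exist $x, y \in \Sigma^*$ with $O_1(x) = O_1(y)$ but $O_2(x) \neq O_2(y)$. Suppose instead that $O_1(x) = O_1(y)$ always implies $O_2(x) = O_2(y)$. Then define $f : S_1 \to S_2$ on the image of $O_1$ by $f(O_1(z)) \coloneqq O_2(z)$; this is well defined by the assumption. Extend $f$ arbitrarily outside the image, which is possible since $S_2$ is nonempty whenever $\Sigma^*$ is. This gives $O_2 = f \circ O_1$, i.e.\ $O_2 \preceq O_1$, a contradiction. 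I expect this to be the only step with real content, and it is just the usual factorisation-through-kernels argument. Only the hypothesis $O_2 \not\preceq O_1$ is used; the inclusion $O_1 \preceq O_2$ is not needed.

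\emph{Step 2 (the language and impossibility for $O_1$).} Set $s \coloneqq O_2(x)$ and $L \coloneqq \{ z \in \Sigma^* \mid O_2(z) = s \}$. Then $x \in L$ and $y \notin L$, while $x \sim_{O_1} y$, so $L \notin \cL(O_1)$ by Definition~\ref{def:saturation}. For any machine $M$ whatsoever, the system $(M, O_1)$ receives the identical input $O_1(x) = O_1(y)$ on $x$ and on $y$. By Definition~\ref{def:observer} it therefore accepts both or rejects both, so $L(M, O_1) \neq L$. This is exactly the argument of Corollary~\ref{cor:trivial-obs}, with the trivial observer replaced by $O_1$.

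\emph{Step 3 (finite-state decidability with $O_2$).} Under Convention~\ref{conv:encoding}, the machine equipped with $O_2$ reads the string $\mathrm{enc}(O_2(z))$. Since $\mathrm{enc}$ is injective, $z \in L$ iff $\mathrm{enc}(O_2(z)) = w$, where $w \coloneqq \mathrm{enc}(s)$ is one fixed finite string. The singleton language $\{w\}$ is regular: a DFA with $|w| + 2$ states (a chain spelling $w$ plus a dead state) recognises it. Hence a finite-state machine equipped with $O_2$ decides $L$. If $S_2$ is finite one can skip the encoding, since the machine just tests membership of the observed symbol in $\{s\}$.
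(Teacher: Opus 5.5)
Your proof is correct and follows the paper's argument: the same witness pair with $O_1(x) = O_1(y)$ but $O_2(x) \neq O_2(y)$, the same language $L = \{z \mid O_2(z) = O_2(x)\}$, and the same non-saturation argument against every machine equipped with $O_1$. Your version is slightly tighter in two places. First, you prove that the witness pair exists by the factorisation argument, which the paper only asserts. Second, you give a genuine DFA for the singleton $\{\mathrm{enc}(O_2(x))\}$. The paper instead uses a three-state automaton that reads the whole observation $s \in S_2$ in one step, and in the following remark falls back on a linear-time Turing machine.
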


\begin{proof}
Since $O_1 \prec O_2$, there exist $x, y \in \Sigma^*$ with
$O_2(x) \neq O_2(y)$ but $O_1(x) = O_1(y)$.
Let $L = \{z \in \Sigma^* \mid O_2(z) = O_2(x)\}$.
Then $x \in L$ and $y \notin L$, but since $O_1(x) = O_1(y)$,
$L$ is not $O_1$-saturated (Definition~\ref{def:saturation}): no system
receiving only $O_1(z)$ can distinguish $x$ from $y$, so $L$ is not
$O_1$-decidable regardless of machine power.

Under observer $O_2$, the three-state machine
$M = (\{q_0, q_{\mathrm{acc}}, q_{\mathrm{rej}}\},\, S_2,\, \delta,\,
q_0,\, \{q_{\mathrm{acc}}\})$ with
\[
  \delta(q_0, s) = \begin{cases}
    q_{\mathrm{acc}} & \text{if } s = O_2(x), \\
    q_{\mathrm{rej}} & \text{otherwise,}
  \end{cases}
\]
accepts exactly the strings $z$ with $O_2(z) = O_2(x)$, so
$L(M, O_2) = L$.
\end{proof}

\begin{remark}
The machine $M$ operates on the codomain $S_2$ of $O_2$, which may be infinite. By Convention~\ref{conv:encoding}, this is interpreted as a
standard Turing machine operating on $\mathrm{enc}(O_2(z))$: it
compares $\mathrm{enc}(O_2(z))$ against the fixed string
$\mathrm{enc}(O_2(x))$ in linear time.
\end{remark}

\begin{corollary}[Structural limitation]
\label{cor:limite_strutturale}
There exist languages that a Turing machine with observer $O_{\mathrm{prof}}$ cannot recognise, but that a DFA with observer
$O_\top$ recognises. The power of the machine cannot compensate for the
weakness of the observer.
\end{corollary}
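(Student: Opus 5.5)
The plan is to exhibit one explicit witness, $L = 0^*1^*$, and verify the two halves of the statement separately. Instantiating Theorem~\ref{thm:separazione} with $O_1 = O_{\mathrm{prof}}$ and $O_2 = O_\top$ would not suffice: its machine on the $O_2$ side is a three-state comparator against a single observation, not a DFA reading the string symbol by symbol. Choosing $0^*1^*$ makes the regularity of the witness evident.

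\emph{Impossibility under $O_{\mathrm{prof}}$.} I would argue exactly as in Corollary~\ref{cor:trivial-obs}. A system $(M, O_{\mathrm{prof}})$ receives only $O_{\mathrm{prof}}(z)$, so by Definition~\ref{def:observer} its recognised language is $\{z \mid M \text{ accepts } O_{\mathrm{prof}}(z)\}$. Any such language is $O_{\mathrm{prof}}$-saturated, whatever $M$ is (Turing machine, oracle machine, or an arbitrary function), since equal observations yield equal verdicts. Therefore every language recognised with this observer lies in $\cL(O_{\mathrm{prof}})$. In the proof of Proposition~\ref{prop:ramo_prof} (strictness of $O_{\mathrm{prof}} \prec O_\top$) it was already shown that $0^*1^* \notin \cL(O_{\mathrm{prof}})$, via $\prof(01) = \prof(10) = (1,1)$ with $01 \in L$ and $10 \notin L$. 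Hence no machine equipped with $O_{\mathrm{prof}}$ recognises $L$.

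\emph{Recognition under $O_\top$.} Since $O_\top(x) = x$, a DFA with observer $O_\top$ is an ordinary DFA. For $0^*1^*$ I would give the three-state automaton with states $q_0$ (reading zeros, accepting), $q_1$ (reading ones, accepting) and a sink $q_{\mathrm{rej}}$. Its transitions are $\delta(q_0,0)=q_0$, $\delta(q_0,1)=q_1$, $\delta(q_1,1)=q_1$, $\delta(q_1,0)=q_{\mathrm{rej}}$, and the sink loops on both symbols. A one-line induction shows that the automaton is in an accepting state exactly when the prefix read so far has the form $0^a1^b$.

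The only delicate point, though minor, is the phrase ``regardless of computational power''. I would handle it by noting that the saturation argument uses nothing about $M$ beyond its being a function of its input. This covers even the non-computable transition functions allowed in Remark~\ref{rem:binding}. The final sentence of the corollary then restates Remark~\ref{rem:binding} in this concrete instance.
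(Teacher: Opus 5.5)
Your proof is correct, but it takes a different route from the paper's. The paper gives no separate argument for this corollary. It places it as the instance $O_1 = O_{\mathrm{prof}}$, $O_2 = O_\top$ of Theorem~\ref{thm:separazione}, using the strictness $O_{\mathrm{prof}} \prec O_\top$ from Proposition~\ref{prop:ramo_prof}. Unwinding that theorem, the witness is the singleton $L = \{z \mid z = x\} = \{x\}$ for a pair such as $x = 01$, $y = 10$. The recognizer is the three-state machine whose transition $\delta(q_0, s)$ consumes the entire observation in one step.

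You instead fix the explicit witness $0^*1^*$, reuse its non-saturation from Proposition~\ref{prop:ramo_prof}, and give a genuine symbol-by-symbol DFA. The impossibility half is the same in both arguments: $L(M, O)$ is always $O$-saturated whatever $M$ is, and the witness is not. So the two routes differ only in the positive half.

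The paper's route is generic, since it works for any strict pair $O_1 \prec O_2$. However, its ``finite-state machine'' is nearly vacuous when $O_2 = O_\top$: with one-shot transitions on $S_2 = \Sigma^*$, every language is recognized by three states. Your route makes the word ``DFA'' in the statement substantive, at the cost of being tied to one example.

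One small correction to your framing: instantiating the theorem would in fact suffice with one extra line. Its witness $\{x\}$ is finite, hence regular, so an ordinary DFA recognizes it as well. What fails is only reading the theorem's machine literally as a DFA, not the instantiation itself.
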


\section{Physical interpretation}
\label{sec:fisica}

The framework admits a natural physical reading. In statistical mechanics, a macroscopic observer has access only to aggregate quantities
such as temperature, pressure, and mean density. Formally, such an
observer is a function $O_{\mathrm{macro}} : H \to \Gamma$ where $H$ is
the space of microscopic histories and $\Gamma$ is the space of macrostates. Identifying $H = \Sigma^*$ and $\Gamma = \NN^k$, the
macroscopic observer coincides exactly with $O_{\mathrm{prof}}$.

\begin{corollary}[Observational limitation]
\label{cor:irreversibilita}
Let $O$ be an observer with $O \preceq O_{\mathrm{prof}}$. Then $O$
cannot distinguish any microscopic history $h$ from any of its
permutations. Every physical property decidable by $O$ is invariant
under permutation of events.
\end{corollary}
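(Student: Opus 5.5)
The plan is to reduce both claims to the factorisation $O = f \circ O_{\mathrm{prof}}$ supplied by Definition~\ref{def:ordine}, together with the observation that a permutation does not change the profile.

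\emph{First claim.} Let $h \in \Sigma^*$ be a microscopic history and let $h'$ be any permutation of it. Here ``permutation'' means a string obtained by rearranging the positions of $h$. Every rearrangement preserves the number of occurrences of each symbol, so $\prof(h') = \prof(h)$, and hence $O_{\mathrm{prof}}(h') = O_{\mathrm{prof}}(h)$. Since $O \preceq O_{\mathrm{prof}}$, there is some $f$ with $O = f \circ O_{\mathrm{prof}}$. Therefore
\[
  O(h') = f(O_{\mathrm{prof}}(h')) = f(O_{\mathrm{prof}}(h)) = O(h),
\]
that is, $h \sim_O h'$. So $O$ cannot distinguish $h$ from $h'$.

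\emph{Second claim.} I read ``physical property decidable by $O$'' as a language $L \in \cL(O)$, i.e.\ an $O$-saturated language in the sense of Definition~\ref{def:saturation}. This is the same notion of $O$-decidability used in Corollary~\ref{cor:trivial-obs} and Theorem~\ref{thm:separazione}. There are two equivalent routes.
\begin{enumerate}[label=(\alph*)]
  \item Direct: if $h \in L$ and $h'$ is a permutation of $h$, the first claim gives $h \sim_O h'$, and saturation gives $h' \in L$.
  \item Via earlier results: Proposition~\ref{prop:monotonia} gives $\cL(O) \subseteq \cL(O_{\mathrm{prof}})$. A language in $\cL(O_{\mathrm{prof}})$ is exactly a permutation-closed language, because $\sim_{O_{\mathrm{prof}}}$ is equality of profiles, and Theorem~\ref{thm:caratterizzazione} says this class is the class of languages recognised by order-blind automata.
\end{enumerate}
I would present route (a) and mention route (b) as a consistency check. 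The argument works unchanged for any finite alphabet.

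The only real obstacle is interpretive: fixing what ``decidable by $O$'' and ``permutation of events'' mean formally. Once the property is taken to be $O$-saturation and permutations are taken to be profile-preserving rearrangements, the proof is a two-line composition argument.
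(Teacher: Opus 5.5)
Your proof is correct and matches the paper's: you factor $O = f \circ O_{\mathrm{prof}}$, use that a permutation preserves the profile to get $O(h') = O(h)$, and conclude that every $O$-saturated language is permutation-closed. Your route (b) via Proposition~\ref{prop:monotonia} is a valid consistency check, but the paper does not need it.
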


\begin{proof}
If $O \preceq O_{\mathrm{prof}}$, then $O = f \circ O_{\mathrm{prof}}$
for some $f$, so $O(x) = f(\prof(x)) = f(\prof(y)) = O(y)$ whenever
$\prof(x) = \prof(y)$, i.e.\ whenever $x$ and $y$ are permutations of
each other. Hence every $O$-saturated language
(Definition~\ref{def:saturation}) is permutation-closed.
\end{proof}

\begin{remark}
Corollary~\ref{cor:irreversibilita} is a restatement of
permutation-closure in physical language. The physical interpretation,
that thermodynamic irreversibility is an observational property of a
macroscopic observer structurally blind to microscopic order, is a
reading of the corollary, not a consequence derived from physical axioms.
A rigorous derivation of thermodynamic irreversibility from this framework
would require additional structure: a probability measure on microscopic
histories, a dynamics preserving the macroscopic observer, and an entropy
functional compatible with $O_{\mathrm{prof}}$.
These extensions are open directions (i) and (iii) in
Section~\ref{sec:complessita}.
\end{remark}

\begin{center}
\begin{tabular}{p{5.5cm}|p{7cm}}
  \textbf{Computer science} & \textbf{Physics} \\
  \hline
  Observer $O : \Sigma^* \to S$ & Measuring instrument \\
  Observation space $S$ & Space of macrostates \\
  $O_{\mathrm{prof}}$ & Thermodynamic quantities \\
  $O_\top$ & Complete microscopic description \\
  $O_1 \preceq O_2$ & $O_1$ coarser than $O_2$ \\
  Non-$O$-saturated language & Non-measurable property \\
  Structural limitation (Cor.~\ref{cor:limite_strutturale}) &
    Thermodynamic irreversibility \\
\end{tabular}
\end{center}

\begin{remark}
The correspondence above is a formal analogy, not a mathematical reduction. The two columns share the same abstract structure (the
observational order) but the physical column does not follow
mathematically from the computer-science column without the additional
ingredients noted in the preceding remark.
\end{remark}

\section{Observer-parametrised complexity classes}
\label{sec:pvsnp}

Classical complexity theory implicitly assumes the observer is always $O_\top$. The observational hierarchy makes this assumption explicit and
allows it to be relaxed.

\begin{remark}[Computational status of observers]
\label{rem:comp-status}
An observer $O : \Sigma^* \to S$ is a fixed total preprocessing map applied once to the input before any machine computation begins. It does
not answer queries, does not interact with the machine, and does not extend the computational model. All machines in this paper are standard
Turing machines over $\Sigma^*$; receiving $O(x)$ means receiving
$\mathrm{enc}(O(x))$ on the input tape (Convention~\ref{conv:encoding}).
The observer only reduces available information and never increases
computational power.

A language $L$ that is not $O$-saturated (Definition~\ref{def:saturation})
is not merely hard under observer $O$: it is not well-posed in the
observation space of $O$. If $x \sim_O y$ but $x \in L$ and $y \notin L$,
then any machine receiving only $O(x)$ cannot decide correctly on both inputs, regardless of computational resources. This is \emph{structural
blindness}, not computational hardness.
\end{remark}

\begin{lemma}[Simulation for computable observers]
\label{lem:simulation}
Let $O : \Sigma^* \to S$ be computable in time $s(|x|)$, where the time bound includes the steps required to write $\mathrm{enc}(O(x))$ on the
output tape, so that $|\mathrm{enc}(O(x))| = O(s(|x|))$. If a Turing
machine decides a language $L$ on input $\mathrm{enc}(O(x))$ in time $p(|\mathrm{enc}(O(x))|)$ for some polynomial $p$, then there exists a
Turing machine deciding $L$ on input $x$ in time
\[
  q(|x|) = s(|x|) + p(O(s(|x|))) + O(1),
\]
which is polynomial in $|x|$ whenever $s$ is polynomial. In particular, $\mathbf{P}_O \subseteq \mathbf{P}$ for every observer computable in
polynomial time.
\end{lemma}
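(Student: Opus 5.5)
The plan is a standard composition of two Turing machines: a transducer $T_O$ for the observer, followed by the decider $M_L$ that works on encoded observations. On input $x$, the combined machine $M'$ first runs $T_O$, which writes $\mathrm{enc}(O(x))$ on a work tape in at most $s(|x|)$ steps by hypothesis. It then returns the head to the left end of that tape and runs $M_L$ on it as though it were the input tape. Finally, $M'$ accepts exactly when $M_L$ accepts. Correctness is immediate from Definition~\ref{def:observer}: $x \in L$ if and only if $M_L$ accepts $\mathrm{enc}(O(x))$, and the second phase is exactly that computation.

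The time analysis is where the real work lies. The first phase costs $s(|x|)$ steps. A machine that runs for $s(|x|)$ steps can write at most $s(|x|)$ cells, so $|\mathrm{enc}(O(x))| \le s(|x|)$; this is the stated bound $|\mathrm{enc}(O(x))| = O(s(|x|))$. The head rewind costs at most $s(|x|)$ more steps, which is absorbed into the first term up to a constant factor. The control-switching overhead between the two phases is $O(1)$.

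For the second phase we may assume without loss of generality that $p$ is monotone nondecreasing, by replacing it with a polynomial that has nonnegative coefficients. Then the cost is $p(|\mathrm{enc}(O(x))|) \le p(O(s(|x|)))$. Summing the three contributions gives $q(|x|) = s(|x|) + p(O(s(|x|))) + O(1)$, with any constant factor from the rewind folded into the big-$O$. If $s$ is polynomial, then $p \circ (c\cdot s)$ is also polynomial, because polynomials are closed under composition and sums. Hence $q$ is polynomial.

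For the consequence $\mathbf{P}_O \subseteq \mathbf{P}$, I will unfold the definition of $\mathbf{P}_O$, which I expect to be: the languages decided by a polynomial-time machine on $\mathrm{enc}(O(x))$, with time measured in the length of the encoded observation. The part I expect to be the main obstacle is fixing which length polynomial time is measured against. If $\mathbf{P}_O$ measures time in $|x|$ rather than in $|\mathrm{enc}(O(x))|$, the argument still works, since $|\mathrm{enc}(O(x))|$ and $|x|$ are then related through $s$. If instead the encoding could be much shorter than $x$ (for instance $O_{\mathrm{prof}}$, whose encoding has length $O(\log|x|)$), then polynomial time in the encoding length is even smaller. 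In both readings I will state the monotone upper bound on $p$ explicitly so that the composition step is valid.
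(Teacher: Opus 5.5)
Your proposal is correct and follows the paper's own argument: compose a transducer for $O$ with the decider for $L$, bound $|\mathrm{enc}(O(x))| \le s(|x|)$ because at most one symbol is written per step, and use closure of polynomials under composition. Your explicit replacement of $p$ by a nondecreasing polynomial and your accounting for the head rewind (which strictly makes the first term $2s(|x|)$, harmless for polynomiality) spell out steps the paper uses silently, and the first reading you anticipate for $\mathbf{P}_O$ is exactly Definition~\ref{def:PO_NPO}.
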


\begin{proof}[Proof sketch]
On input $x$, the simulating machine first computes $\mathrm{enc}(O(x))$ in time $s(|x|)$; the output length satisfies
$|\mathrm{enc}(O(x))| = O(s(|x|))$ by the time bound on the
computation. It then runs the given machine on $\mathrm{enc}(O(x))$,
which takes time $p(|\mathrm{enc}(O(x))|) = p(O(s(|x|)))$. The total time $q(|x|) = s(|x|) + p(O(s(|x|))) + O(1)$ is polynomial in $|x|$
whenever $s$ is polynomial. Full proof in Appendix~\ref{app:simulation}.
\end{proof}

\begin{remark}[On the boldness of this formalisation]
We are aware that Definition~\ref{def:PO_NPO} below is a non-standard step: defining complexity classes parametrised by an observer opens a
two-dimensional landscape in which $\mathbf{P} = \mathbf{NP}$ is one
point. We offer it because the calculations compel us to. The
incomparability results of Section~\ref{sec:gerarchia}, the orthogonality
with the Chomsky hierarchy in Section~\ref{sec:ortogonalita}, and the
collapse of $\mathbf{P}_{O_{\mathrm{prof}}} =
\mathbf{NP}_{O_{\mathrm{prof}}}$ established below are all concrete, verifiable facts. They do not prove anything about the classical
$\mathbf{P}$ vs $\mathbf{NP}$ question, and we do not claim they do. What they show is that the distinction between hardness and blindness is
real and measurable.
\end{remark}

\begin{definition}[$\mathbf{P}_O$ and $\mathbf{NP}_O$]
\label{def:PO_NPO}
Fix an observer $O : \Sigma^* \to S$.
\begin{itemize}
  \item $\mathbf{P}_O$ is the set of languages decidable in time 
        polynomial in $|\mathrm{enc}(O(x))|$ by a deterministic machine
        receiving $\mathrm{enc}(O(x))$ as its entire input.
  \item $\mathbf{NP}_O$ is the set of languages for which there exists
        a nondeterministic machine that, receiving $\mathrm{enc}(O(x))$
        and a certificate of length polynomial in $|\mathrm{enc}(O(x))|$,
        verifies membership in time polynomial in $|\mathrm{enc}(O(x))|$.
\end{itemize}
\end{definition}

\begin{remark}
The complexity is measured with respect to $|\mathrm{enc}(O(x))|$, the
size of the input actually received by the machine, not $|x|$. For $O = O_\top$ these coincide. For $O = O_{\mathrm{prof}}$ on a binary
alphabet, $\mathrm{enc}(O_{\mathrm{prof}}(x))$ encodes the pair $(|x|_0, |x|_1)$ in binary, giving
$|\mathrm{enc}(O_{\mathrm{prof}}(x))| = O(\log |x|)$, strictly smaller than $|x|$.

The machine receives $\mathrm{enc}(O(x))$ as given and is not required
to compute $O(x)$ from $x$. When reducing $\mathbf{P}_O$ to $\mathbf{P}$,
however, one must construct a standard TM that first computes $O(x)$ from
$x$ and then runs the given machine: this requires $O$ to be computable,
as formalised in Lemma~\ref{lem:simulation}.
All canonical observers of Definition~\ref{def:osservatori_canonici} are
computable; in particular, $O_{\mathrm{prof}}(x) = (|x|_0, |x|_1)$ is
computed by a single scan of $x$.
\end{remark}

\begin{proposition}[Boundary cases]
\label{prop:boundary}
\begin{enumerate}[label=(\roman*)]
  \item $\mathbf{P}_{O_\top} = \mathbf{P}$ and
        $\mathbf{NP}_{O_\top} = \mathbf{NP}$.
  \item $\mathbf{P}_{O_\bot} = \mathbf{NP}_{O_\bot} =
        \{\emptyset, \Sigma^*\}$.
  \item $\mathbf{P}_{O_{\mathrm{prof}}} = \mathbf{NP}_{O_{\mathrm{prof}}}$,
        consisting exactly of the $O_{\mathrm{prof}}$-saturated languages
        $L = \{x \mid f(\prof(x)) = 1\}$ for which $f : \NN^k \to \{0,1\}$
        is computable in time polynomial in
        $|\mathrm{enc}(O_{\mathrm{prof}}(x))|$.
\end{enumerate}
\end{proposition}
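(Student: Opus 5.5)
The plan is to treat the three items separately and to work directly from Definition~\ref{def:PO_NPO}, using the encoding Convention~\ref{conv:encoding} throughout.

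For (i), note that $O_\top$ is the identity. Hence $\mathrm{enc}(O_\top(x))$ is $x$ itself, up to a fixed linear-time, linear-size re-encoding, which is the only place the convention enters. Polynomial time in $|\mathrm{enc}(O_\top(x))|$ is therefore polynomial time in $|x|$, and the two definitions reduce verbatim to the standard definitions of $\mathbf{P}$ and of $\mathbf{NP}$ via polynomial-length certificates.

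For (ii), I will argue in two steps.
\begin{itemize}
  \item \emph{Only the trivial languages qualify.} Every language in $\mathbf{P}_{O_\bot}$ or $\mathbf{NP}_{O_\bot}$ is decided or verified from $\mathrm{enc}(\star)$ alone. It is therefore $O_\bot$-saturated, so by Corollary~\ref{cor:trivial-obs} it lies in $\{\emptyset,\Sigma^*\}$.
  \item \emph{Both trivial languages qualify.} The machines that always reject and always accept place $\emptyset$ and $\Sigma^*$ in both classes.
\end{itemize}

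For (iii), I first establish the structural characterisation of $\mathbf{P}_{O_{\mathrm{prof}}}$.
\begin{itemize}
  \item \emph{Saturation.} Any $L \in \mathbf{P}_{O_{\mathrm{prof}}}$ is $O_{\mathrm{prof}}$-saturated, because the machine sees only $\mathrm{enc}(\prof(x))$. By Theorem~\ref{thm:caratterizzazione}, $L=\{x\mid f(\prof(x))=1\}$ for some $f$.
  \item \emph{Complexity of $f$.} The deciding machine itself computes $f$ on encodings in polynomial time.
  \item \emph{Converse.} Conversely, such an $f$ yields a decider for $L$ directly.
\end{itemize}
The inclusion $\mathbf{P}_{O_{\mathrm{prof}}}\subseteq \mathbf{NP}_{O_{\mathrm{prof}}}$ is immediate by ignoring the certificate. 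Likewise, every $L\in\mathbf{NP}_{O_{\mathrm{prof}}}$ is saturated, so it also has the form $\{x\mid g(\prof(x))=1\}$ with $g$ verifiable in polynomial time.

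The remaining inclusion $\mathbf{NP}_{O_{\mathrm{prof}}}\subseteq\mathbf{P}_{O_{\mathrm{prof}}}$ is the main obstacle, and I expect the statement as written to fail here. My first attempt would be to eliminate the certificate by exhaustive search. However, certificates have length $\mathrm{poly}(n')$ with $n'=|\mathrm{enc}(\prof(x))|=O(\log|x|)$, so brute force costs $2^{\mathrm{poly}(n')}$. That bound is polynomial in $|x|$, but not in $n'$, which is the measure the definition uses.

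The logarithmic input size does not save the argument, because the image of $O_{\mathrm{prof}}$ is all of $\NN^2$. A binary encoding of $(c_0,c_1)$ can therefore carry an arbitrary bit string $w$; for instance, take $c_1$ to be the number with binary expansion $1w$. Any $\mathbf{NP}$ language $K$ then induces $f_K(c_0,c_1)=[\,\text{$c_1$ encodes some } w\in K\,]$. This $f_K$ lies in $\mathbf{NP}_{O_{\mathrm{prof}}}$, and it lies in $\mathbf{P}_{O_{\mathrm{prof}}}$ only if $K\in\mathbf{P}$. So the collapse would imply $\mathbf{P}=\mathbf{NP}$.

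I would therefore try to prove (iii) only in a corrected form, in one of two ways.
\begin{itemize}
  \item \emph{Measure time in $|x|$.} Under this reading, the brute-force search above runs in time polynomial in $|x|$ and gives the collapse, and the resulting class consists of the saturated languages with $f$ polynomial in the unary size.
  \item \emph{Restrict $f$.} Alternatively, restrict to $f$ whose certificates can be enumerated in time polynomial in $n'$.
\end{itemize}
I would flag explicitly that the literal statement requires one of these amendments.
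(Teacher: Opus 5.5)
\paragraph{Parts (i), (ii) and the collapse.}
Your treatment of (i), (ii), and of $\mathbf{P}_{O_{\mathrm{prof}}}$ as the saturated languages with $f$ polynomial-time on encodings, agrees with the paper. You are slightly more careful in checking that $\emptyset$ and $\Sigma^*$ really lie in both classes.

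On the collapse, your objection is correct. It is the paper's proof, not yours, that has the gap. Its only justification for $\mathbf{NP}_{O_{\mathrm{prof}}}\subseteq\mathbf{P}_{O_{\mathrm{prof}}}$ is that ``no combinatorial structure remains to search over, so nondeterminism adds nothing''. That silently assumes a predicate on $\mathrm{enc}(c_0,c_1)$ that is polynomially verifiable is polynomially decidable.

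Every $(c_0,c_1)\in\NN^2$ is a profile, so $\mathrm{enc}(O_{\mathrm{prof}}(x))$ can carry an arbitrary bit string. Your language $\{x \mid f_K(\prof(x))=1\}$ with $K=\mathrm{SAT}$ then shows that the collapse implies $\mathbf{P}=\mathbf{NP}$. You can close the loop. If $\mathbf{P}=\mathbf{NP}$, then for a verifier $V$ the set $\{e \mid \exists y,\ |y|\le p(|e|),\ V(e,y)\text{ accepts}\}$ lies in $\mathbf{NP}=\mathbf{P}$, which gives the collapse. So (iii) is \emph{equivalent} to $\mathbf{P}=\mathbf{NP}$. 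Your ``fails'' should read ``cannot be proved without settling $\mathbf{P}$ vs $\mathbf{NP}$''. This directly contradicts Remark~\ref{rem:Oprof-vs-P}(b).

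\paragraph{The gap: your first repair.}
The step in your proposal that does fail is the first repair. You assert that $2^{\mathrm{poly}(n')}$ is polynomial in $|x|$. But with $n'=\Theta(\log|x|)$ and certificate length $p(n')$ of degree $d$, brute force costs $2^{\Theta(\log^d|x|)}$. This is superpolynomial for $d\ge 2$; it is polynomial only for certificates of length $O(n')$.

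No version of ``measure time in $|x|$'' rescues the collapse unconditionally.
\begin{itemize}
  \item Keep certificates of length $\mathrm{poly}(n')$ but allow deterministic time $\mathrm{poly}(|x|)$. Your own construction with $c_0=0$ gives $|x|=c_1<2^{|w|+1}$, so the collapse would yield $\mathbf{NP}\subseteq\mathbf{E}$.
  \item Also measure certificate length in $|x|$. Encode a tally set $T\in\mathbf{NP}$ as $f(0,n)=[1^n\in T]$. The collapse would then put every tally $\mathbf{NP}$ set in $\mathbf{P}$, i.e.\ $\mathbf{E}=\mathbf{NE}$ by Book's theorem.
\end{itemize}
Both consequences are open problems. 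Only your second repair gives an unconditional statement: bounding certificate length by $O(\log n')$, or by $O(n')$ when time is measured in $|x|$. That repair is essentially a restatement of the brute-force bound.
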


\begin{proof}
(i) $O_\top(x) = x$, so a machine in $\mathbf{P}_{O_\top}$ receives $x$
directly and runs in time polynomial in $|x|$: this is $\mathbf{P}$.
The same for $\mathbf{NP}$.

(ii) The observation space of $O_\bot$ is the singleton $\{\star\}$:
every input maps to the same value. A deterministic or nondeterministic
machine receiving always $\star$ either accepts all strings or none.

(iii) A machine receiving $\mathrm{enc}(O_{\mathrm{prof}}(x)) \in
\{0,1\}^*$ has a complete description of the symbol counts. Membership in $L$ depends (by $O_{\mathrm{prof}}$-saturation) only on this vector.
The machine reads its entire input in time linear in $|\mathrm{enc}(O_{\mathrm{prof}}(x))|$ and evaluates the membership
predicate. There is no combinatorial structure remaining to search over, the observer has performed the only aggregation possible, so
nondeterminism adds nothing. Hence $\mathbf{P}_{O_{\mathrm{prof}}} =
\mathbf{NP}_{O_{\mathrm{prof}}}$.
\end{proof}

\begin{proposition}[Position of $\mathbf{P}_{O_{\mathrm{prof}}}$]
\label{prop:Oprof-vs-P}
\begin{enumerate}[label=(\roman*)]
  \item $\mathbf{P}_{O_{\mathrm{prof}}} \subseteq \mathbf{P}$.
  \item $\mathbf{P}_{O_{\mathrm{prof}}} \subsetneq \mathbf{P}$.
  \item $\mathbf{P}_{O_{\mathrm{prof}}} = \mathbf{NP}_{O_{\mathrm{prof}}}
        \subsetneq \mathbf{P}$.
\end{enumerate}
\end{proposition}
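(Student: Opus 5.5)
The plan is to prove the three items in order, with (iii) obtained by combining (ii) with Proposition~\ref{prop:boundary}(iii).

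\emph{Item (i).} I would derive this directly from Lemma~\ref{lem:simulation}. It suffices to check that $O_{\mathrm{prof}}$ is computable in polynomial time with the output-writing convention required there. A single left-to-right scan of $x$ maintains two binary counters for $|x|_0$ and $|x|_1$. Each increment costs $O(\log|x|)$ steps, so the total is $s(|x|) = O(|x|\log|x|)$. Writing $\mathrm{enc}(O_{\mathrm{prof}}(x))$ with separators then costs $O(\log|x|)$ further steps. Given $L \in \mathbf{P}_{O_{\mathrm{prof}}}$, decided by a machine running in time $p(|\mathrm{enc}(O_{\mathrm{prof}}(x))|)$, the lemma yields a machine on input $x$ running in time
\[
  O(|x|\log|x|) + p(O(\log|x|)),
\]
which is polynomial in $|x|$. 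Hence $L \in \mathbf{P}$.

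\emph{Item (ii).} Strictness needs a witness in $\mathbf{P}$ that is not $O_{\mathrm{prof}}$-saturated. I would take $0^*1^*$, which a DFA decides in linear time, or equally $L_1 = \{x \mid 01 \sqsubseteq x\}$ from Example~\ref{ex:sat-no}. The key step, which I expect to be the main obstacle, is to make rigorous that every language in $\mathbf{P}_{O_{\mathrm{prof}}}$ is $O_{\mathrm{prof}}$-saturated.

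The argument is the structural-blindness observation of Remark~\ref{rem:comp-status}, made precise. A deterministic machine $M$ witnessing $L \in \mathbf{P}_{O_{\mathrm{prof}}}$ decides $x \in L$ iff $M$ accepts $\mathrm{enc}(O_{\mathrm{prof}}(x))$. Since $M$ is deterministic, its verdict is a function of its input string alone. Because $\prof(01) = \prof(10) = (1,1)$, the encodings $\mathrm{enc}(O_{\mathrm{prof}}(01))$ and $\mathrm{enc}(O_{\mathrm{prof}}(10))$ are the same string. So $M$ gives the same answer on $01$ and $10$, whereas $01 \in 0^*1^*$ and $10 \notin 0^*1^*$. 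This contradiction shows $0^*1^* \notin \mathbf{P}_{O_{\mathrm{prof}}}$, and combined with (i) gives the proper inclusion. I would state the saturation fact as a small general claim, $\mathbf{P}_O \cup \mathbf{NP}_O \subseteq \cL(O)$ for every observer $O$. It holds because acceptance is determined by $\mathrm{enc}(O(x))$, so that $\sim_O$-equivalent strings receive equal verdicts; for $\mathbf{NP}_O$ this is the existence of an accepting certificate. The claim is then reused in (iii).

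\emph{Item (iii).} The equality $\mathbf{P}_{O_{\mathrm{prof}}} = \mathbf{NP}_{O_{\mathrm{prof}}}$ is Proposition~\ref{prop:boundary}(iii), and the strict inclusion in $\mathbf{P}$ is item (ii). If one prefers not to lean on the informal step in Proposition~\ref{prop:boundary}(iii), the general saturation claim already shows $0^*1^* \notin \mathbf{NP}_{O_{\mathrm{prof}}}$. Thus $\mathbf{P} \not\subseteq \mathbf{NP}_{O_{\mathrm{prof}}}$ holds independently, and with the equality it gives $\mathbf{NP}_{O_{\mathrm{prof}}} \subsetneq \mathbf{P}$.
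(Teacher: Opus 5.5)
Items (i) and (ii) follow the paper's proof: Lemma~\ref{lem:simulation} for (i) and the witness $0^*1^*$ for (ii). Your general claim $\mathbf{P}_O \cup \mathbf{NP}_O \subseteq \cL(O)$ is correct and makes explicit a step the paper leaves implicit, and the difference between $O(|x|\log|x|)$ and $O(|x|)$ is immaterial.

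The genuine gap is in (iii), and you inherit it from the paper. The equality $\mathbf{P}_{O_{\mathrm{prof}}} = \mathbf{NP}_{O_{\mathrm{prof}}}$ is imported from Proposition~\ref{prop:boundary}(iii), whose argument (``no combinatorial structure remains to search over'') is not a proof. In fact, under Definition~\ref{def:PO_NPO} with the binary encoding the paper uses, that equality is equivalent to $\mathbf{P} = \mathbf{NP}$, because an arbitrary instance can be hidden in the bits of $|x|_0$. Write $\mathrm{bin}(n)$ for the binary representation of $n$ and $\mathrm{num}(u)$ for its inverse. For any $A \subseteq \{0,1\}^*$ in $\mathbf{NP}$, set
\[
  L_A = \{x \in \{0,1\}^* \mid \mathrm{bin}(|x|_0) = 1w \text{ for some } w \in A\}.
\]
Then $L_A$ is $O_{\mathrm{prof}}$-saturated and lies in $\mathbf{NP}_{O_{\mathrm{prof}}}$: from $\mathrm{enc}(O_{\mathrm{prof}}(x))$ you read off $w$, of length $O(|\mathrm{enc}(O_{\mathrm{prof}}(x))|)$, and run the verifier for $A$. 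Suppose a machine $D$ witnessed $L_A \in \mathbf{P}_{O_{\mathrm{prof}}}$. On input $w$ you could compute $\mathrm{enc}(\mathrm{num}(1w),0) = \mathrm{enc}(O_{\mathrm{prof}}(0^{\mathrm{num}(1w)}))$, a valid encoding of length $O(|w|+1)$, and run $D$ on it. This decides $A$ in time polynomial in $|w|$.

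So $\mathbf{P}_{O_{\mathrm{prof}}} = \mathbf{NP}_{O_{\mathrm{prof}}}$ implies $\mathbf{P} = \mathbf{NP}$. The converse is immediate: if $\mathbf{P} = \mathbf{NP}$, the set of encodings admitting an accepting certificate is in $\mathbf{P}$. Since (ii) holds, item (iii) as stated is equivalent to $\mathbf{P} = \mathbf{NP}$, contrary to Remark~\ref{rem:Oprof-vs-P}(b). No argument of this kind can establish it.

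Your fallback does not repair this. The saturation claim gives $0^*1^* \in \mathbf{P} \setminus \mathbf{NP}_{O_{\mathrm{prof}}}$, but the inclusion $\mathbf{NP}_{O_{\mathrm{prof}}} \subseteq \mathbf{P}$ still rests entirely on the equality. On its own that inclusion is also out of reach: applied to $L_A$ on the input $0^{\mathrm{num}(1w)}$, of length below $2^{|w|+1}$, it would put every $\mathbf{NP}$ language in $\mathbf{DTIME}(2^{O(n)})$, which is an open problem.

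What your argument proves unconditionally is $\mathbf{P}_{O_{\mathrm{prof}}} \subseteq \mathbf{NP}_{O_{\mathrm{prof}}} \subseteq \mathbf{NP}$ (for the second inclusion, compute the profile and then guess the certificate), together with $\mathbf{P} \not\subseteq \mathbf{NP}_{O_{\mathrm{prof}}}$. Item (iii) has to be weakened to that.
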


\begin{proof}
\emph{(i)} By Lemma~\ref{lem:simulation}: $O_{\mathrm{prof}}$ is computable in $O(|x|)$ time (one left-to-right scan of $x$), so
Lemma~\ref{lem:simulation} applies and gives $\mathbf{P}_{O_{\mathrm{prof}}} \subseteq \mathbf{P}$.

Concretely: given $L \in \mathbf{P}_{O_{\mathrm{prof}}}$, let $M$ be a TM deciding $L$ on input $\mathrm{enc}(O_{\mathrm{prof}}(x))$ in time
$T(|\mathrm{enc}(O_{\mathrm{prof}}(x))|)$ for some polynomial $T$. Construct $M'$ that on input $x$: (1) computes $\mathrm{enc}
(O_{\mathrm{prof}}(x)) = \mathrm{enc}(|x|_0, |x|_1)$ in $O(|x|)$
steps; (2) runs $M$ on this encoding. Since $|\mathrm{enc}(O_{\mathrm{prof}}(x))| = O(\log|x|)$, step~(2) takes $T(O(\log|x|)) = o(|x|)$ steps. Total: $O(|x|) \in \mathbf{P}$.

\emph{(ii)} The language $0^*1^*$ is in $\mathbf{P}$ (one linear scan)
but not $O_{\mathrm{prof}}$-saturated (Theorem~\ref{thm:ortogonalita}),
so $0^*1^* \notin \mathbf{P}_{O_{\mathrm{prof}}}$.

\emph{(iii)} Combines Proposition~\ref{prop:boundary}(iii) with (ii).
\end{proof}

\begin{remark}[Implications of the structural collapse]
\label{rem:Oprof-vs-P}
Proposition~\ref{prop:Oprof-vs-P}(iii) has four distinct implications.

\medskip\noindent
\textbf{(a) Non-triviality.}
$\mathbf{P}_{O_{\mathrm{prof}}}$ is a genuine intermediate class:
nonempty and strictly below $\mathbf{P}$.

\medskip\noindent
\textbf{(b) Separation of hardness from blindness.}
The collapse $\mathbf{P}_{O_{\mathrm{prof}}} =
\mathbf{NP}_{O_{\mathrm{prof}}}$ is not equivalent to
$\mathbf{P} = \mathbf{NP}$. The proof makes no assumption about
$\mathbf{P}$ vs $\mathbf{NP}$ and implies nothing about it. The collapse
is due entirely to structural blindness: $O_{\mathrm{prof}}$ discards
so much information that no combinatorial structure remains for
nondeterminism to exploit.

\medskip\noindent
\textbf{(c) Sublinear containment.}
Since $|\mathrm{enc}(O_{\mathrm{prof}}(x))| = O(\log|x|)$, every machine operating under $O_{\mathrm{prof}}$ receives a logarithmically
compressed summary. The total runtime of the simulation in Proposition~\ref{prop:Oprof-vs-P}(i) is $O(|x|)$, dominated by the
cost of computing $O_{\mathrm{prof}}(x)$, while the decision itself takes $o(|x|)$. The collapse occurs strictly below $\mathbf{P}$ in
terms of the original input size.

\medskip\noindent
\textbf{(d) Conditional position in the $\mathbf{P}$ vs $\mathbf{NP}$
landscape.}
Proposition~\ref{prop:Oprof-vs-P}(iii) holds unconditionally. If one additionally assumes $\mathbf{P} \neq \mathbf{NP}$ (a hypothesis
external to this paper, neither assumed nor implied by any result here), then with Proposition~\ref{prop:boundary}(i):
\[
  \mathbf{NP}_{O_{\mathrm{prof}}}
  = \mathbf{P}_{O_{\mathrm{prof}}}
  \subsetneq \mathbf{P}
  = \mathbf{P}_{O_\top}
  \subsetneq \mathbf{NP}
  = \mathbf{NP}_{O_\top}.
\]
Under this conditional hypothesis, every language a nondeterministic
verifier with observer $O_{\mathrm{prof}}$ can certify is already
decidable in polynomial time by a deterministic machine with full input
access. The reason is not that nondeterminism loses its power, but that
$O_{\mathrm{prof}}$ cannot see the combinatorial structure that makes
problems hard in the first place.
\end{remark}

\begin{theorem}[Two independent axes]
\label{thm:due_assi}
Computational hardness and observational decidability are independent:
\begin{enumerate}[label=(\roman*)]
  \item There exist languages in $\mathbf{P}$ that are not
        $O_{\mathrm{prof}}$-saturated.
  \item There exist languages outside $\mathbf{RE}$ that are
        $O_{\mathrm{prof}}$-saturated.
\end{enumerate}
\end{theorem}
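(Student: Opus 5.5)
For part~(i) I will reuse the witness from Theorem~\ref{thm:ortogonalita} and Proposition~\ref{prop:Oprof-vs-P}(ii), namely $L = 0^*1^*$. It lies in $\mathbf{P}$ because a single left-to-right scan decides it. It is not $O_{\mathrm{prof}}$-saturated because $\prof(01) = \prof(10) = (1,1)$, while $01 \in L$ and $10 \notin L$. This part is immediate and needs no new ideas.

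For part~(ii) the plan is to build a non-r.e.\ permutation-closed language by pulling a non-r.e.\ set of naturals back through the profile. Fix a set $K \subseteq \NN$ that is not recursively enumerable; for definiteness take the complement of the halting set $\overline{H} = \{n \mid \text{the } n\text{-th TM does not halt on empty input}\}$. Define
\[
  L_K = \{x \in \{0,1\}^* \mid |x|_0 \in K\}.
\]
This language has the form $\{x \mid f(\prof(x)) = 1\}$ with $f(c_0,c_1) = [c_0 \in K]$. Hence it is $O_{\mathrm{prof}}$-saturated, and by Theorem~\ref{thm:caratterizzazione} it is also permutation-closed.

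The remaining step, and the only one needing care, is to show $L_K \notin \mathbf{RE}$. I will argue by reduction. Suppose $L_K$ were enumerated or semi-decided by some machine $M$. Then $K$ would be semi-decided by the machine that, on input $n$, writes the string $0^n$ and runs $M$ on it. This works because $0^n \in L_K \iff n \in K$, and the map $n \mapsto 0^n$ is computable. That contradicts the choice of $K$ as non-r.e.

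I expect this reduction to be the main (if mild) obstacle. It requires being explicit that the map $n \mapsto 0^n$ is a computable many-one reduction from $K$ to $L_K$, so that r.e.\ sets are closed under preimages of computable maps. It also requires fixing the encoding of naturals consistently with Convention~\ref{conv:encoding}, which is unnecessary here since we reduce through strings over $\Sigma$ directly.

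Finally I will note, consistent with Remark~\ref{rem:binding}, that the order-blind automaton recognising $L_K$ exists only because $\delta$ need not be computable. Together, (i) and (ii) show that membership in $\cL(O_{\mathrm{prof}})$ neither implies nor is implied by low computational complexity, which is the claimed independence.
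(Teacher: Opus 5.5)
Your proposal is correct and matches the paper's proof. Part (i) uses the same witness $0^*1^*$, and part (ii) uses the same idea of pulling a non-r.e.\ set back through the profile and refuting r.e.-ness by a computable many-one reduction into unary-shaped strings. The only difference is cosmetic: you make membership depend on $|x|_0$ alone and reduce via $n \mapsto 0^n$, which is the reduction the paper later uses for $L_{\mathrm{br}}$, whereas the paper uses the two-argument predicate $g(|x|_0,|x|_1)$ with the reduction $(m,n) \mapsto 0^m 1^n$.
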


\begin{proof}
(i) $0^*1^*$ is in $\mathbf{P}$ but not $O_{\mathrm{prof}}$-saturated
(Theorem~\ref{thm:ortogonalita}).

(ii) Let $g : \NN^2 \to \{0,1\}$ be defined by $g(m,n) = 1$ if and only
if the $m$-th Turing machine does not halt on input $n$. The language
$L = \{x \in \{0,1\}^* \mid g(|x|_0, |x|_1) = 1\}$ is
$O_{\mathrm{prof}}$-saturated by construction. It is outside
$\mathbf{RE}$: the map $(m,n) \mapsto 0^m 1^n$ would witness
$K^c \leq_m L$ if $L$ were r.e.\ (since $0^m 1^n \in L \iff g(m,n) = 1
\iff (m,n) \in K^c$), making $K^c$ r.e., a contradiction since $K^c$
is the complement of the halting set.
\end{proof}

\begin{corollary}[Four-quadrant independence]
\label{cor:quattro_quadranti}
All four combinations of membership in $\mathbf{P}$ and
$O_{\mathrm{prof}}$-saturation are non-empty:
\begin{center}
\begin{tabular}{lll}
  \toprule
  & $O_{\mathrm{prof}}$-saturated & not $O_{\mathrm{prof}}$-saturated \\
  \midrule
  in $\mathbf{P}$ &
    $\{x \mid |x|_1 \text{ even}\}$ &
    $0^*1^*$ \\
  outside $\mathbf{RE}$ &
    as in Theorem~\ref{thm:due_assi}(ii) &
    $L_{\mathrm{br}}$ (see below) \\
  \bottomrule
\end{tabular}
\end{center}
\end{corollary}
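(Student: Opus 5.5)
The plan is to fill the four cells one at a time, reusing earlier results for three of them and building an explicit witness $L_{\mathrm{br}}$ for the fourth. The row labels are ``in $\mathbf{P}$'' and ``outside $\mathbf{RE}$'', so the second row needs languages outside $\mathbf{RE}$, which is stronger than merely not being in $\mathbf{P}$.

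The three easy cells go as follows.
\begin{itemize}
\item \textbf{In $\mathbf{P}$ and saturated.} Take $\{x \mid |x|_1 \text{ even}\}$. It is decided by a single linear scan, and it depends only on $\prof(x)$, so it is $O_{\mathrm{prof}}$-saturated. It is also one of the witnesses in Proposition~\ref{prop:ramo_prof}.
\item \textbf{In $\mathbf{P}$ and not saturated.} Take $0^*1^*$, which is Theorem~\ref{thm:due_assi}(i).
\item \textbf{Outside $\mathbf{RE}$ and saturated.} Use the language of Theorem~\ref{thm:due_assi}(ii).
\end{itemize}

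For the remaining cell I will take the language of Theorem~\ref{thm:due_assi}(ii) and break its permutation symmetry. Let $K^c$ be the complement of the halting set, indexed by pairs $(m,n)$ as in that proof. Define
\[
  L_{\mathrm{br}} = \{0^m 1^n \mid (m,n) \in K^c\}.
\]

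\textbf{Step 1: $L_{\mathrm{br}}$ is not saturated.} I need some $(m,n)\in K^c$ with $m,n\ge 1$. Such a pair exists, since there are infinitely many non-halting machine/input pairs, and we may take a machine index $m \geq 1$ that never halts and any $n\ge1$. Then $0^m1^n \in L_{\mathrm{br}}$, while $1^n0^m$ has the same profile but is not of the form $0^*1^*$, so $1^n0^m \notin L_{\mathrm{br}}$. Hence $L_{\mathrm{br}}$ is not $O_{\mathrm{prof}}$-saturated.

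\textbf{Step 2: $L_{\mathrm{br}}$ is not in $\mathbf{RE}$.} The computable map $(m,n)\mapsto 0^m1^n$ is a many-one reduction from $K^c$ to $L_{\mathrm{br}}$, exactly as in Theorem~\ref{thm:due_assi}(ii). If $L_{\mathrm{br}}$ were r.e., then $K^c$ would be r.e., which is a contradiction.

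The only step requiring care is choosing $(m,n)\in K^c$ with $m,n\ge1$ in Step 1. This is needed because when $m$ or $n$ is $0$, the string $0^m1^n$ is the unique string with its profile, so no swap is available. Everything else is routine bookkeeping.
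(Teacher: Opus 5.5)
Your proposal is correct and follows the same plan as the paper. The first three cells come from earlier results, and the fourth is a language that couples an order-sensitive condition with a non-r.e.\ counting condition, shown non-saturated by swapping two strings with equal profile and shown non-r.e.\ by a many-one reduction from $K^c$. The paper uses $\{x \mid x \text{ starts with } 0 \text{ and } |x|_0 \in K^c\}$ with the reduction $n \mapsto 0^n$. Your $\{0^m1^n \mid (m,n) \in K^c\}$ is an equivalent variant that handles the zero-exponent edge cases cleanly: the paper's swap tacitly needs $m \geq 1$, and its reduction can fail at $n = 0$, though both gaps are harmless.
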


\begin{proof}
The top row and bottom-left cell follow from
Theorem~\ref{thm:ortogonalita} and Theorem~\ref{thm:due_assi}.

For the bottom-right cell, let $K^c = \{n \in \NN \mid \text{the }
n\text{-th TM does not halt on input } n\}$.
Since $K^c$ is nonempty, fix $m \in K^c$. Define
\[
  L_{\mathrm{br}} = \{x \in \{0,1\}^* \mid
    x \text{ starts with } 0 \text{ and } |x|_0 \in K^c\}.
\]
$L_{\mathrm{br}}$ is not $O_{\mathrm{prof}}$-saturated: the strings
$0^m 1$ and $10^m$ both have profile $(m,1)$, but $0^m 1$ starts with
$0$ and $|0^m 1|_0 = m \in K^c$ so $0^m 1 \in L_{\mathrm{br}}$, while
$10^m$ starts with $1$ so $10^m \notin L_{\mathrm{br}}$.

$L_{\mathrm{br}}$ is outside $\mathbf{RE}$: if $L_{\mathrm{br}}$ were
r.e., the map $n \mapsto 0^n$ would witness $K^c \leq_m L_{\mathrm{br}}$
(since $0^n$ starts with $0$ and $|0^n|_0 = n \in K^c \iff 0^n \in
L_{\mathrm{br}}$), making $K^c$ r.e., a contradiction.
\end{proof}

\section{Conclusion}

We have introduced the observational hierarchy as a new axis of
classification for formal languages, independent of the Chomsky
hierarchy. The hierarchy has the structure of a partial order with a
diamond-shaped profile sub-lattice comprising the length branch
$O_\bot \prec O_{\mathrm{len}} \prec O_{\mathrm{prof}} \prec O_\top$
and the parity branch
$O_\bot \prec O_{\mathrm{par}} \prec O_{\mathrm{prof}} \prec O_\top$,
with $O_{\mathrm{len}}$ and $O_{\mathrm{par}}$ incomparable at the
intermediate level, and an infinite subsequence branch
$O_\bot \prec O_1 \prec O_2 \prec \cdots \prec O_\top$.

We have shown that computational limitations depend not only on the
power of the machine, but on the structure of the observer. There exist
languages that no Turing machine can recognise when equipped with a
sufficiently weak observer, but that a DFA recognises with a more
powerful observer. The power of the machine cannot compensate for the
weakness of the observer.

The framework extends to complexity theory through
$\mathbf{P}_O$ and $\mathbf{NP}_O$. The central distinction is between
computational hardness (which depends on the machine) and structural
blindness (which depends on what the observer has discarded). These are
two independent phenomena. In particular,
$\mathbf{P}_{O_{\mathrm{prof}}} = \mathbf{NP}_{O_{\mathrm{prof}}}$
holds as a structural collapse strictly inside $\mathbf{P}$
(Proposition~\ref{prop:Oprof-vs-P}): it is not a resolution of the
classical $\mathbf{P}$ vs $\mathbf{NP}$ question, but evidence that
blindness and hardness are genuinely different phenomena.

\paragraph{Open directions.}
\begin{enumerate}[label=(\roman*)]
  \item \emph{Probabilistic observers}: stochastic channels
        $O : \Sigma^* \to \Delta(S)$, where $\preceq$ becomes Blackwell
        degradation (Section~\ref{sec:teoria} sketches the connection).
  \item \emph{Composed observers}: categorical structure of observer
        composition, its relation to forgetful functors, and the
        connection between the product of two observers and their join in
        the observational lattice.
  \item \emph{Observational complexity and information}: formalising the
        connection between the observational complexity of a language and
        the amount of information the observer must preserve, via the
        data processing inequality.
  \item \emph{Intermediate observers}: algebraic characterisation of
        classes induced by observers that combine subsequence and count
        information, lying strictly between the profile sub-lattice and
        the subsequence branch.
  \item \emph{Profile sub-lattice completion}: characterisation of all
        observers strictly between $O_\bot$ and $O_{\mathrm{prof}}$,
        incomparable with both $O_{\mathrm{len}}$ and $O_{\mathrm{par}}$.
  \item \emph{Sublinear complexity}: whether $\mathbf{P}_{O_{\mathrm{prof}}}$
        coincides with a known sublinear complexity class, and how the
        observational axis interacts with the fine structure of complexity
        below $\mathbf{P}$ (Remark~\ref{rem:Oprof-vs-P}(c)).
\end{enumerate}

\begin{remark}
Directions (ii) and (iii) are closely related: a categorical
formalisation of (ii) would provide the infrastructure for the
quantitative theory in (iii). Direction (vi) is directly connected to
Remark~\ref{rem:Oprof-vs-P}(c): any progress on the position of
$\mathbf{P}_{O_{\mathrm{prof}}}$ within sublinear complexity would
sharpen our understanding of the structural collapse in
Proposition~\ref{prop:boundary}(iii).
\end{remark}

\paragraph{Personal note.}
%
%
Several sources shaped the conceptual framework developed here.
Jacques Monod's \textit{Le Hasard et la N\'ecessit\'e} (1970) showed
how teleonomic properties are invisible to the molecular chemistry that
implements them. Jorge Luis Borges's \textit{La Biblioteca de Babel},
Douglas Hofstadter's \textit{G\"odel, Escher, Bach} (1979), and
John D.\ Barrow's \textit{Impossibility: The Limits of Science and the
Science of Limits} (1998) contributed to the imaginative framework that
crystallised into the ideas developed here.\footnote{These sources
shaped the conceptual framework. All formal results are independent of
them and stand on their own proofs.}

\section*{Acknowledgments}
The author used an artificial intelligence based language assistant to
support text revision, translation, and bibliography formatting. All
scientific ideas and conclusions are the author's own.

\nocite{*}
\bibliographystyle{plain}
\bibliography{references}

\appendix

\section{Proof of Lemma~\ref{lem:simulation}}
\label{app:simulation}

This appendix gives the full formal proof of the simulation lemma stated
and sketched in Section~\ref{sec:pvsnp}.

\begin{proof}[Full proof of Lemma~\ref{lem:simulation}]
Let $O : \Sigma^* \to S$ be computable in time $s(|x|)$ where $s$ is a
function such that the computation halts after at most $s(|x|)$ steps
and writes $\mathrm{enc}(O(x))$ on a dedicated output tape.
Since a Turing machine can write at most one symbol per step, the output
length satisfies $|\mathrm{enc}(O(x))| \leq s(|x|)$, hence
$|\mathrm{enc}(O(x))| = O(s(|x|))$.

Let $M$ be a TM that decides $L$ on input $\mathrm{enc}(O(x))$ in time
$p(|\mathrm{enc}(O(x))|)$ for some polynomial $p$. Construct the
simulating machine $M'$ as follows. On input $x \in \Sigma^*$:
\begin{enumerate}
  \item Run the computation of $O$ on $x$ to produce
        $y = \mathrm{enc}(O(x))$. This takes at most $s(|x|)$ steps.
  \item Run $M$ on $y$. This takes at most $p(|y|)$ steps.
  \item Output whatever $M$ outputs.
\end{enumerate}
The total time for $M'$ on $x$ is at most $s(|x|) + p(|y|) + O(1)$.
Since $|y| = |\mathrm{enc}(O(x))| \leq s(|x|)$, we have
$p(|y|) \leq p(s(|x|))$, giving total time
$q(|x|) = s(|x|) + p(s(|x|)) + O(1)$.
When $s$ is a polynomial, $q$ is also a polynomial in $|x|$, so
$M' \in \mathbf{P}$ witnesses $L \in \mathbf{P}$.

The conclusion $\mathbf{P}_O \subseteq \mathbf{P}$ follows: every
$L \in \mathbf{P}_O$ has a TM $M$ deciding it under $O$ in polynomial
time in $|\mathrm{enc}(O(x))|$. The construction above produces $M'$
deciding $L$ under $O_\top$ in polynomial time in $|x|$.
\end{proof}

\end{document}